\newtheorem{lemma}{Lemma}
\newcommand{\graph}{{\cal G}}
\newcommand{\vertices}{{\cal V}}
\newcommand{\edges}{{\cal E}}
\newcommand{\samp}[1]{{#1}_{S}}
\newcommand{\alg}{\textsc{EWSample}}
\newcommand{\algws}{\textsc{WedgeSample}}
\newcommand{\alges}{\textsc{EdgeSample}}
\newcommand{\lowdeg}[1]{\delta(#1)}
\begin{document}
\begin{NoHyper}

\title{Edge-Based Wedge Sampling to Estimate Triangle Counts in Very Large Graphs}

\author{\IEEEauthorblockN{Duru T\"urko\u{g}lu}
\IEEEauthorblockA{DePaul University, Chicago, IL\\
\texttt{dturkogl@cs.depaul.edu}}
\and
\IEEEauthorblockN{Ata Turk}
\IEEEauthorblockA{Boston University, Boston, MA\\
\texttt{ataturk@bu.edu}}
}

\maketitle

\sloppy

\begin{abstract}

The number of triangles in a graph is useful to deduce a plethora of
important features of the network that the graph is modeling.
However, finding the exact value of this number is computationally expensive. 
Hence, a number of approximation algorithms based on random 
sampling of edges, or wedges (adjacent edge pairs) have been proposed for estimating
this value.
%
%
%
We argue that for large sparse graphs with power-law degree
distribution, random edge sampling requires sampling large number of
edges before providing enough information for accurate estimation, and existing wedge sampling methods lead to biased samplings, which in turn lead to less accurate estimations.
In this paper, we propose a hybrid algorithm between edge and wedge
sampling that addresses the deficiencies of both approaches.
We start with uniform edge sampling and then extend each selected edge
to form a wedge that is more informative for estimating the overall
triangle count.
The core estimate we make is the number of triangles each sampled edge
in the first phase participates in.
This approach provides accurate approximations with very small
sampling ratios, outperforming the state-of-the-art up to 8 times in
sample size while providing estimations with 95\% confidence.

\end{abstract}




\section{Introduction}
\label{sec:introduction}

Graphs are heavily employed in modeling relationships, networks, and interactions of many real world applications. User interactions and usage patterns in social or mobile networks, compound/atom/amino-acid/molecule bonding/interaction tendencies in chemistry or biology, process/data/resource interaction/communication/dependency patterns in parallel/distributed systems can be listed among the examples where graph-based modeling is heavily used.  

Triangles in a graph represent ternary relationships among the modeled
objects/entities. The count of triangles in a
graph is an important metric that can be used in determining the degree of
clustering of the modeled networks. Graph metrics such as clustering
coefficient~\cite{Newman:2003} or transitivity ratio~\cite{Wasserman:1994} make
use of triangle count, and they are used as features for critical
applications such as social network analysis~\cite{Yang:2014}, gene expression
microarray data analysis~\cite{Kalna:2007}, or text
summarization~\cite{Al:2014}.

Currently, the best known theoretical algorithm for identifying the exact number of triangles in a given graph $\graph$ has a complexity of $O\left(m^{2\gamma/(\gamma+1)}\right) \approx O(m^{1.407})$~\cite{Alon:1997}, where
$m$ is the number of edges in $\graph$, and $\gamma < 2.372864$. This
algorithm computes the third power of the adjacency matrix on the
high-degree vertices and $\gamma$ is the exponent of the state-of-the-art matrix multiplication algorithm~\cite{LeGall:2014}.
The practically used triangle counting and listing algorithms,
however,
take $O(m^{3/2})$ time~\cite{Schank:2005a}.
Moreover, since finding the exact triangle count is computationally
expensive for very large graphs with billions of edges that do not fit
in memory, some studies focus on minimizing the number of disk I/Os performed~\cite{Hu:2014, Kim:2016} during counting. 

In many scenarios finding the exact triangle count is not necessary
and an approximate number is sufficient as the graphs are very
dynamic. A number of studies propose
approximation solutions for estimating the triangle count~\cite{Bar-Yossef:2002, Jowhari:2005, Buriol:2006, Latapy:2008, Tsourakakis:2009, Jha:2013, Schank:2005a, Seshadhri:2014, Etemadi:2016}. These studies can be categorized based on the settings they consider: the streaming setting where graph components arrive as a data stream~\cite{Bar-Yossef:2002, Buriol:2006, Jowhari:2005, Jha:2013}, the semi-streaming setting where a constant number of passes over the edges are allowed~\cite{Becchetti:2008, Kolountzakis:2010}, and the static setting where the entire graph is available for analysis~\cite{Tsourakakis:2009, Seshadhri:2013, Seshadhri:2014}.
Another vein of studies consider efficient parallelization of this estimation process~\cite{Suri:2011, Pagh:2012, Kolda:2014}.
In this paper, we focus on the triangle count estimation problem under the static sequential setting. However, the intuition we provide can be used in a streaming setting as well.

The state-of-the-art approximation approaches are mostly based on random edge or wedge sampling. 
Random wedge sampling approaches first sample random vertices where vertex sampling probability is proportional to wedge participation probability of that vertex. After vertex sampling, two edges of each selected vertex are sampled to form wedges. The ratio of sampled wedges to closed wedges is used to estimate the number of triangles in
the original graph~\cite{Seshadhri:2014}. When deployed on power-law degree graphs, wedge sampling approaches can suffer from bias (mostly high degree vertices are selected as hinge points of selected wedges), as they start with vertex sampling.
Random edge sampling approaches count the number of triangles~\cite{Tsourakakis:2009} or the number of wedges that are ``closed''~\cite{Etemadi:2016} (that have a third closing edge in the original graph) in the sampled subgraph, and use this count to estimate the triangle count. However, if graphs are sparse and the sampling ratio is low, these approaches can suffer from scarcity of valuable information (e.g. triangles, wedges) in the sampled subgraph.

We propose an algorithm for the graph triangle count estimation
problem that combines the strong points of edge and wedge sampling
while alleviating their deficiencies. Our approach starts with random
edge sampling to avoid challenges arising in power-law graphs. Then we
first estimate the number of triangles the sampled edges participate
in the original graph. We use that first estimate to estimate the
triangle count of the original graph. 
We accomplish this 
by turning the sampled
edges into wedges in a second random edge selection phase and by checking if these sampled wedges are closed.

Our contributions can be listed as follows: (i) We propose an approach
that offers highly accurate triangle count estimations even when
employed over sparse power-law degree graphs and the sampling ratio is
low. (ii) We theoretically prove the bounds of our algorithm and show
with experiments that found bounds are tight. We also provide
theoretical bounds for the state-of-the-art approaches. (iii) We
theoretically and practically show that the proposed approach achieves up
to eight times sampling size reduction over the state-of-the-art 
on large-scale real-world graphs.

The remainder of this paper is organized as follows. We present
existing approximation approaches in Section~\ref{sec:background}
along with our observations that led to the development of the proposed
algorithm. Section~\ref{sec:proposed_method} discusses details of our
proposed edge-based wedge selection algorithm. We provide a
theoretical analysis of the proposed method and other methods in the
literature in Section~\ref{sec:rse}. We evaluate the  performance of
the proposed method and compare it against the state-of-the-art in
Section~\ref{sec:eval}. In Section~\ref{sec:conclusion} we conclude.


\vspace{-1ex}
\section{Counting Triangles in a Graph}
\label{sec:background}

%

Given an undirected graph $\graph=(\vertices, \edges)$ defined by a set of 
vertices ${\cal V}$ and a set $\edges$ of pairwise relations (edges) among vertices in ${\cal V}$, we would like to estimate $\Delta$, the number of triangles (three edge cycles) in $\graph$. 
Approximation approaches to this problem mostly consider randomly
sampling edges or adjacent edge pairs (wedges) from the original graph $\graph$ and try to estimate the number of triangles in $\graph$ based on analysis conducted on the sampled graph entities.

%

Existing approximation methods have certain deficiencies
when applied to graphs modeling real networks that exhibit power-law
degree distributions with many low-degree vertices and few very
high-degree vertices~\cite{Barabasi:1999}. Using the current random edge
sampling methods,
sampled subgraphs can be too sparse
to reveal sufficient information for accurate estimation when low
sampling ratios are employed.  Using the current random wedge-sampling
methods, the uniformity on the wedge selection scheme favors
high-degree vertices, and this leads to less accurate estimations.   

In the remainder of this section, we explain the state-of-the-art edge and
wedge sampling based approaches in detail, discuss their strengths and
deficiencies, and provide our intuitions to address these
deficiencies.  In Table~\ref{tbl:notations} we summarize  the
important notations that we use in our discussions and in the
remainder of the paper.

\vspace{-1ex}
\subsection{Random edge sampling }
Random edge sampling based approaches first construct a subgraph
$\samp{\graph}$ of $\graph$ by traversing over each edge in $\edges$
and adding each edge to the subgraph $\samp{\graph}$ with probability
$p$. They then extract features from $\samp{\graph}$ 
for estimating $\Delta$.

\subsubsection{Counting triangles in $\samp{\graph}$}
The idea of random edge sampling to construct a subgraph $\samp{\graph}$ was
first proposed by Tsourakakis et al.~\cite{Tsourakakis:2009}. In their
work, they use
the number of triangles in $\samp{\graph}$ to estimate $\Delta$ as
$\samp{\Delta}/p^3$, where $\samp{\Delta} = 
\left| \{ (u,v,w) ~|~ (u,v),(v,w),(w,u) \in \samp{\graph} \} \right|$. This
estimation is based on the fact that all three edges of any triangle
in $\samp{\graph}$ has to be selected with probability $p$ from $\graph$. A
drawback of this approach is that when $p$ is small, $\samp{\graph}$ is very
unlikely to contain even one triangle, making this approach
impractical for small sampling ratios.

\subsubsection{Counting closed wedges in $\samp{\graph}$}
Recently, Etemadi et al.~\cite{Etemadi:2016} proposed an approach that
improves upon \cite{Tsourakakis:2009}. They first identify all of the
wedges formed in $\samp{\graph}$, and then check for edges in $\graph$
to determine whether the wedges in $\samp{\graph}$ are closed in $\graph$. That is, for
any wedge $u-$$v$$-w$ in $\samp{\graph}$, they check whether the third
edge $(w,u)$ is in~$\graph$ or not.  They use the count of such \textit{closed wedges}
to estimate $\Delta$ as $\samp{\Lambda}^+ / 3p^2$, where
$\samp{\Lambda}^+$=$\left| \{ (u,v,w) ~|~ (u,v),(v,w) \in
\samp{\graph}, (w,u) \in \graph \} \right|$.\footnote{
The original definition in~\cite{Etemadi:2016} has the $\frac{1}{3}$
multiplier but we instead count the actual number of closed wedges and
report one third to estimate $\Delta$.}
This estimation is due to the following two facts that the two
edges forming a wedge in $\samp{\graph}$ have to be independently selected
with probability $p$, and that any triangle contains three wedges. 

\begin{table}
\begin{tabular}{c l}
\hline
Symbol & Description\\
\hline
$\graph$ & Original graph with vertex set $\vertices$ and edge set $\edges$\\
$\Delta$ & Number of triangles in $\graph$\\
$\Lambda$ & Number of wedges in $\graph$ \\
$\samp{\graph}$ & Random subgraph by edge selection\\
$\samp{\Delta}$ & Number of triangles in $\samp{\graph}$\\
$\samp{\Lambda}^{+}$ & Number of closed wedges in $\samp{\graph}$ \\
$T(e)$ & Number of triangles in $\graph$ that contain edge $e$ \\
$\samp{T}$ & Sum of $T(e)$ for edges in $\samp{\graph}$: $\sum_{e \in \samp{\graph}} T(e)$\\
$\tau$ & Estimate on $\samp{T}$ ({\small \alg})\\
$d_u$ & Degree of vertex $u$ \\
$\delta(e)$ & Degree of lower degree end vertex of edge $e$=$(u,v)$: \\ 
            & $\delta(e)$=$min(d_u, d_v)$\\
$\sigma(t)$ & Sum of $\delta(e)$ for edges in triangle $t$: $\sum_{e \in t} \delta(e)$\\
\hline
\end{tabular}
\caption{Summary of the notations.}
\label{tbl:notations}
\vspace{-3.5ex}
\end{table}

Although~\cite{Etemadi:2016} enables selection of smaller
$p$ values compared to the proposed approach in~\cite{Tsourakakis:2009}, we
note that 
achieving accurate estimates still requires
a large sampling probability $p$ as the chance of observing
sufficient number of wedges remains low when sampling edges from large
sparse graphs. In particular, for a graph $\graph$ with $m$ edges and $\Delta$ triangles, the expected number of wedges in the sampled graph that is part of a triangle in the original graph is $3\Delta p^2$~\cite{Etemadi:2016}
%
.  
In our subsequent experiments, $\Delta$ can be in the order of
$10^{10}$, and we would like to support $p$ values in the range
$[10^{-5}, 10^{-6}]$. In such a scenario, clearly, the number of closed
wedges in the sampled subgraph (each of which form a triangle in the
original graph) will be too low to provide accurate estimations.  

\subsubsection{Strengths of edge-based sampling approaches}
Random edge-based sampling methods do not require any previous
knowledge on the structure of $\graph$ and can be applied without any
preprocessing. Furthermore, in power-low graphs, factoring
computations over edges instead of vertices can alleviate challenges
associated with the power-law degree distribution of vertices and can also ease parallelism~\cite{Gonzalez:2012}.

\subsubsection{Deficiencies of edge-based sampling approaches}
The core problem of existing edge-based sparsification approaches is that
the relationship between the findings they extract from the sampled
graph $\samp{\graph}$ and the estimation they make for triangle count
in $\graph$ is a cubic or quadratic function of $p$. This prevents
accurate estimation for low $p$ values. Ideally, the estimation should
be directly proportional to $p$. 

\subsubsection{Proposed improvement over edge sampling}
The intuition behind our proposed approach is similar in nature
to the improvement proposed by~\cite{Etemadi:2016}
over~\cite{Tsourakakis:2009}. In~\cite{Etemadi:2016}, they point out that instead of counting
entities that appear with probability $p^3$ (triangles in
$\samp{\graph}$), they can count entities that appear with probability
$p^2$ (closed wedges in $\samp{\graph}$) and thus enable
using lower sampling ratios while achieving the same 
accuracy.
In this work, we take one more step further and estimate $\Delta$
using entities that appear with probability $p$.

In our approach, we also start with randomly selecting a set of edges $\samp{\edges}\subset\edges$. 
Then, using each edge $e$ in $\samp{\edges}$,
we directly estimate the number of triangles that $e$ is a part of.
More formally, let $T(e)$ be the number of triangles in $\graph$ that
$e$ is a part of. 
Furthermore, let $\samp{T}$ be the total number of triangles in
$\graph$ that each
edge in $\samp{\edges}$ is a part of, i.e.
$\samp{T}=\sum_{e\in\samp{\edges}} T(e)$.  In our approach, we
estimate $\samp{T}$, which can also be written as:
$\samp{T}=
\left| \{ (u,v,w) ~|~ (u,v) \in \samp{\edges},
(v,w), (w,u) \in \graph \} \right|$. 
We call our estimation for $\samp{T}$ as $\tau$ and further
estimate $\Delta$ as $\tau/3p$, since an edge can be selected with
probability $p$ and since each triangle has three edges.
Our estimate of $\samp{T}$ is directly proportional to
$p$.  We explain in detail how we estimate $\samp{T}$ and why the probability
associated with $\samp{T}$ estimation is proportional to $p$ in
Section~\ref{sec:proposed_method} and Section~\ref{sec:rse} respectively.

%

\subsection{Random wedge sampling}
Random wedge sampling approaches extract wedges 
 from $\graph$ and analyze properties of extracted wedges to estimate $\Delta$. 

\subsubsection{Uniform wedge sampling}
Random uniform wedge sampling for triangle counting was first proposed by Schank and Wagner~\cite{Schank:2005}.  
More recently, Seshadhri et al.~\cite{Seshadhri:2013, Seshadhri:2014}
analyze this approach in depth for counting various triadic
measures, including counting triangles. Similar appraoches have also
been considered in a streaming setting~\cite{Jha:2013}. 

In a nutshell,
uniform random wedge sampling first picks a vertex $v$ in $\vertices$
based on the number of wedges hinging at $v$  and then investigates if
a randomly created wedge that consists of two randomly selected edges
of $v$ is a closed wedge or not. The ratio of closed wedges to sampled
wedges is used to estimate $\Delta$ by multiplying one third of the
ratio with $\Lambda$, the total number of wedges in $\graph$. 
This
estimation is based on the fact that there are three different closed wedges
for each triangle
in $\graph$.

\subsubsection{Strengths of uniform wedge sampling}
This approach does not get adversely impacted from
the sparsity of graphs and can support small sampling probabilities as
each sampled entity (wedge) is an entity of interest and overall
search target (the number of triangles in the graph) is directly
proportional to the findings about sampled entities (ratio of closed wedges).   

\subsubsection{Deficiencies of uniform wedge sampling}
For graphs with power-law degree distributions, triangles are not
uniformly distributed across wedges, as shown on an example in Figure~\ref{fig:analysis}. In this figure, we present the percentage of wedges hinging at vertices with a given degree (indicated via green circles) and the percentage of triangles that vertices with a given degree participate in (indicated via red triangles) for a real-world power-law Web graph from Google~\cite{Kunegis:2013}. 
As can be seen from the figure,
 the uniform wedge sampling approach would
select more wedges hinging at higher degree vertices simply because
there are more such wedges. In fact, as expected, as the vertex degree
$d$ increases, the number of wedges hinging at vertices with degree
$d$ increases almost quadratically. However, the number of triangles
that vertices with degree $d$ participate in have a linear
relationship with the degree. 
These characteristics demonstrate that
uniform wedge sampling will provide biased estimations.
We observe similar characteristics in many real world networks and they
become more prominent as graphs grow larger and sparser.

\begin{figure}
    \centering
        \includegraphics[width=6.8cm]{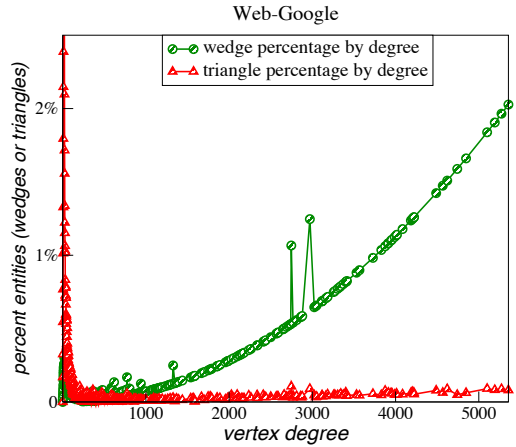}
        \caption{Wedge and triangle distribution per vertex degree for the Web-Google dataset~\cite{Kunegis:2013}.}
        \label{fig:analysis}
\vspace{-3.5ex}
\end{figure}

It is also important to note that, for applying uniform wedge
sampling, one needs to compute the degree distribution of the vertices
and the number of wedges in $\graph$; therefore, such approaches
require a preprocessing step which takes linear time in the number of edges.

\subsubsection{Proposed improvement over uniform wedge sampling}
%
To avoid favoring high degree vertices, our algorithm starts by random
edge sampling instead of uniform wedge sampling.  We then extend each
sampled edge to a wedge, and check for the existence of the closing
edge in $\graph$ to determine whether the vertices of the wedge define
a triangle or not. To increase the accuracy of our estimate, we only
consider the wedges that hinge at the lower degree vertex of the
selected edges. By selecting random edges in the first step, we ensure
that we start with a good sampling distribution. Note that the
probability of sampling a wedge in our schema is linearly
proportional with the degree of the vertex that the wedge hinges at,
whereas for uniform wedge sampling, that probability is
quadratically proportional with the degree of the hinge vertex.


\section{Edge-Based Wedge Sampling}
\label{sec:proposed_method}

\begin{figure*}
    \centering
    \begin{subfigure}[b]{0.23\textwidth}
        \includegraphics[width=\textwidth]{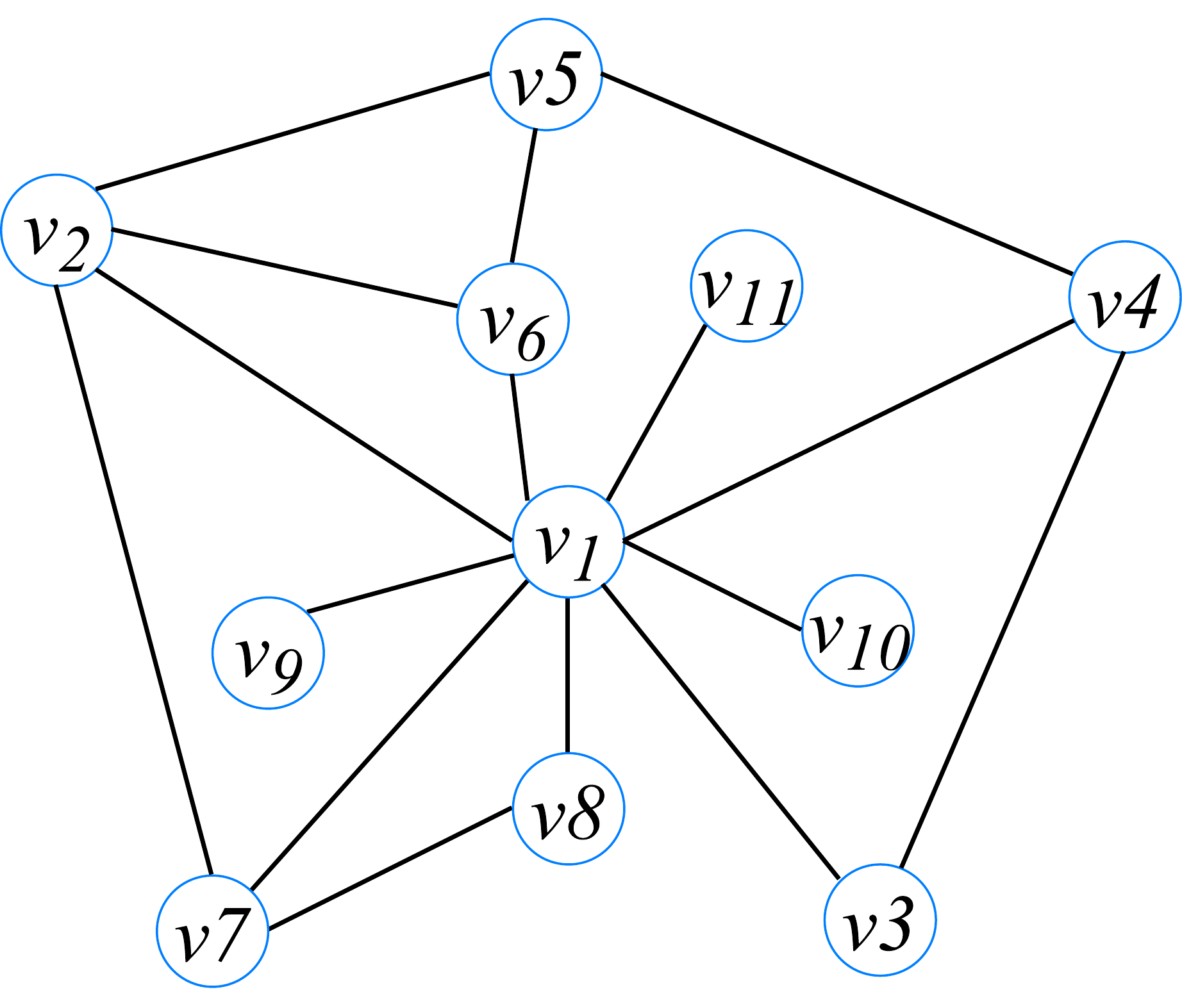}
        \caption{Example graph $\graph$ with 11 vertices, 16 edges, and 5 triangles. Sampling probability $p$ is given as $p$=$3/16$. }
        \label{fig:gull}
    \end{subfigure}
    ~ 
    \begin{subfigure}[b]{0.23\textwidth}
        \includegraphics[width=\textwidth]{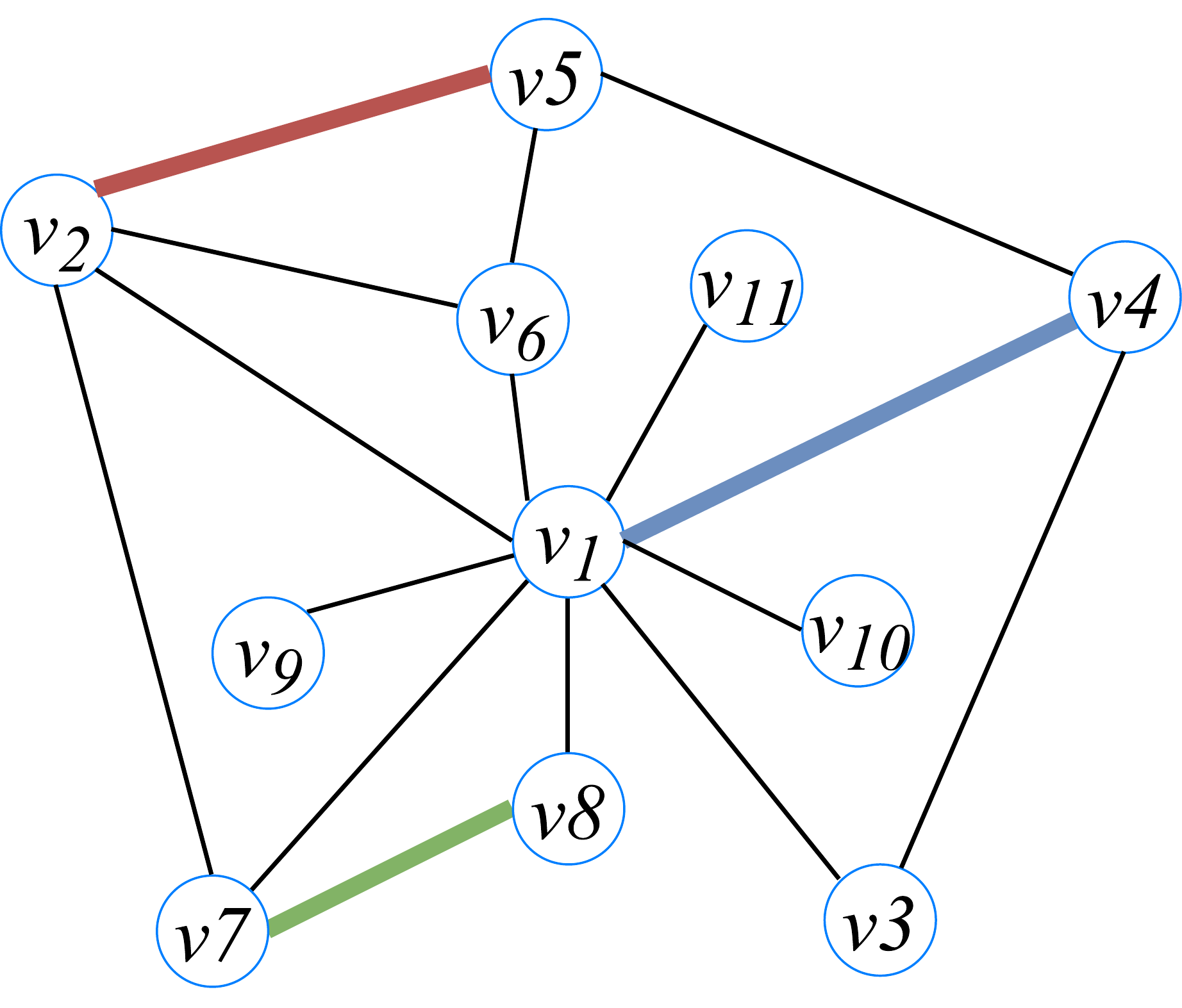}
        \caption{Random edge sampling over the example graph $\graph$ provides three selected edges: $\samp{\edges}$=${\{(v_2, v_5), (v_1, v_4), (v_7, v_8)\}}$.}
        \label{fig:tiger}
    \end{subfigure}
    ~ 
    \begin{subfigure}[b]{0.23\textwidth}
        \includegraphics[width=\textwidth]{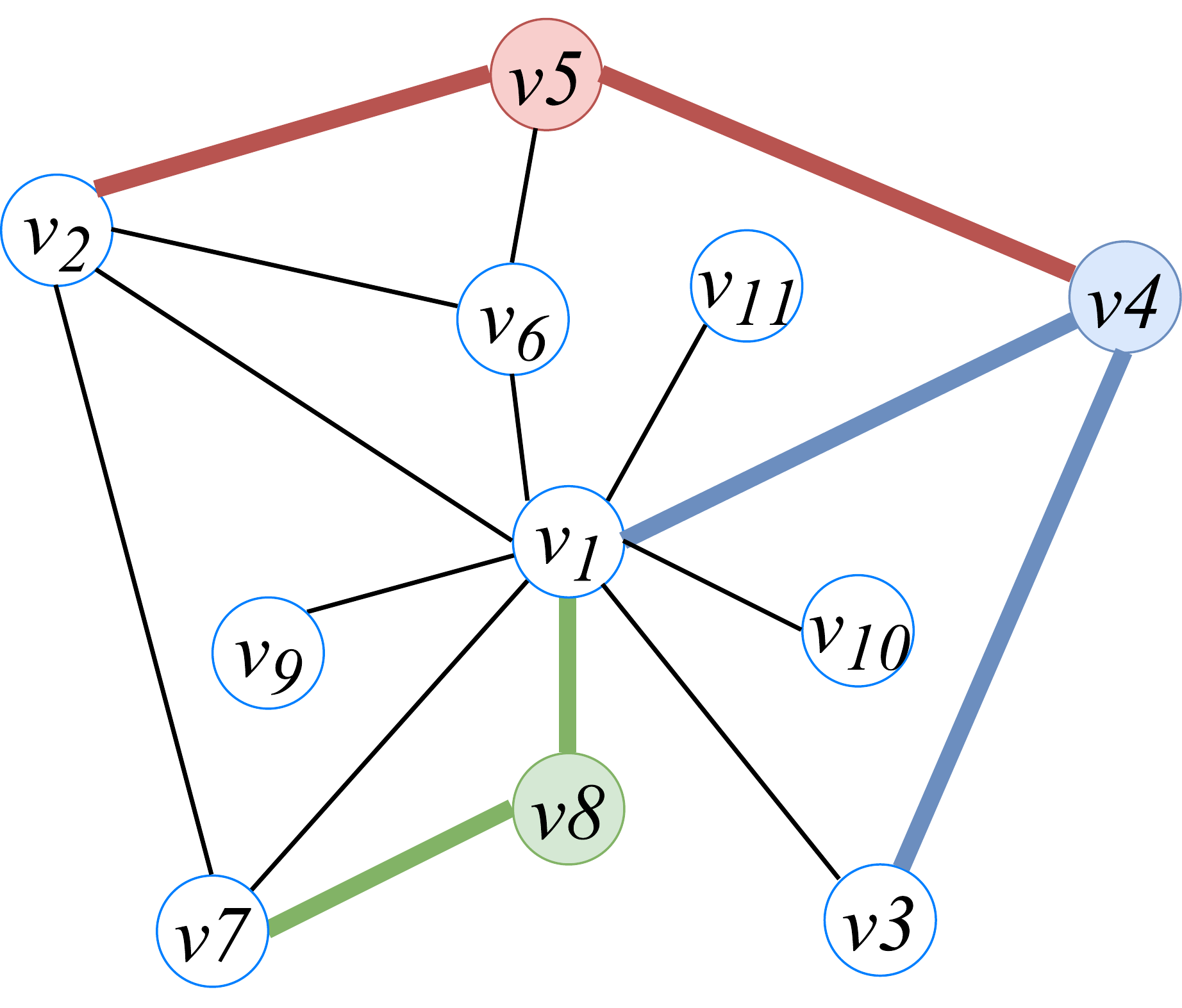}
        \caption{Informed wedge sampling for selected edges. Each wedge hinges on the lower degree end of the initially selected edges.}
        \label{fig:mouse}
    \end{subfigure}
    ~ 
    \begin{subfigure}[b]{0.23\textwidth}
        \includegraphics[width=\textwidth]{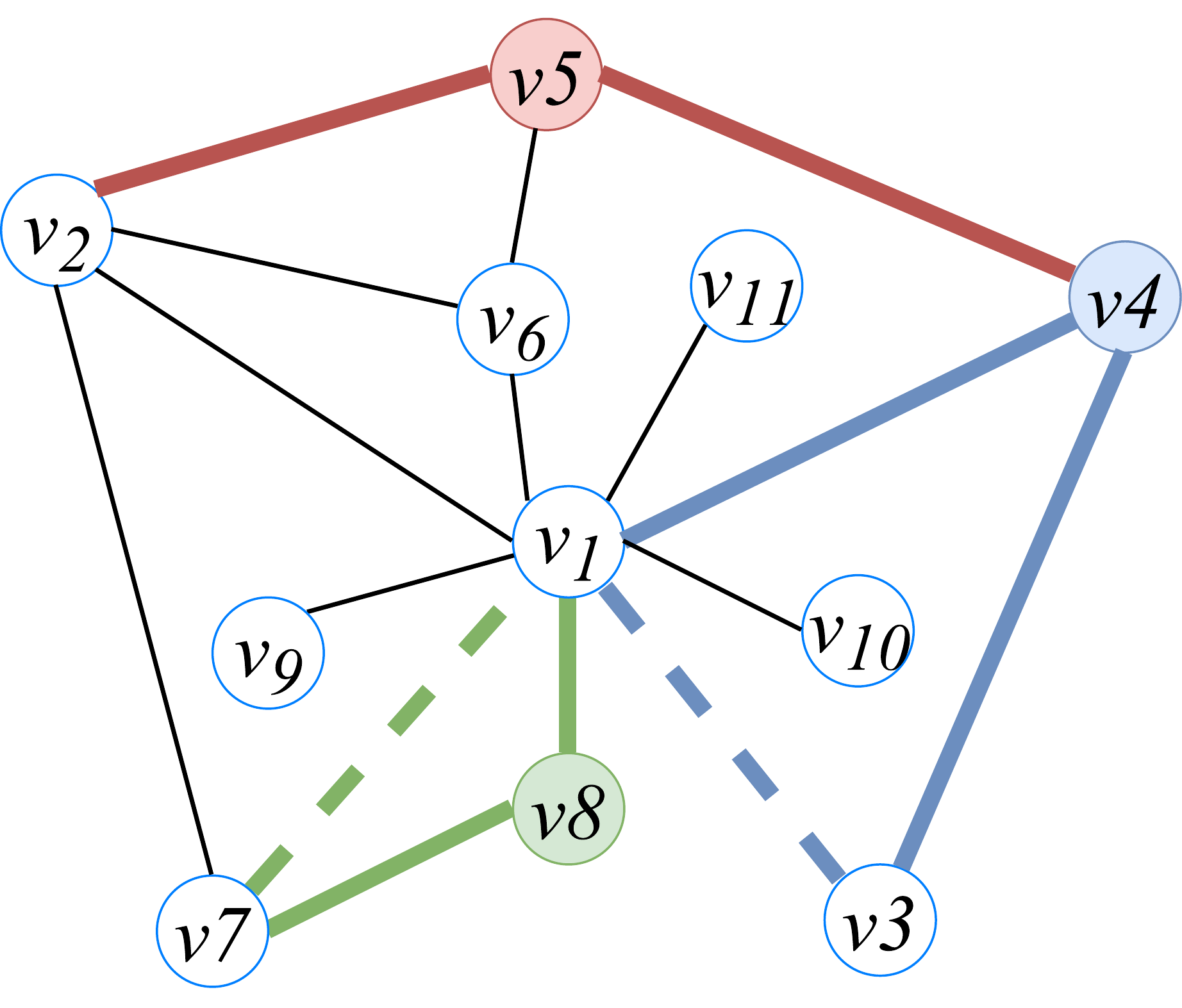}
        \caption{Triangle check in $\graph$. Wedges hinging at $v_4$, $v_8$ form triangles, $\tau$$=$$(d_{v_4}-1)$$+$$(d_{v_8}-1)$$=$$3$. Estimate is $\tau/3p\approx 5.33$. }
        \label{fig:mouse}
    \end{subfigure}
    \caption{Illustration of $\alg$ on an example graph.}
    \label{fig:illustration}
    \vspace{-1.5ex}
\end{figure*}

Our approach to estimating the
number of triangles in $\graph$ is based on sampling a set of edges $\samp{\edges}$
and estimating the number of triangles in $\graph$ that the sampled edges are
part of.
To accomplish this, after sampling a set of edges $\samp{\edges}$ in a
first step, we keep sampling further edges in a second step to form
wedges with each sampled edge in $\samp{\edges}$. 
%
%
To increase the accuracy of our estimate, in the second step, we only consider edges that form
wedges hinging at the lower degree vertices of the edges selected in the first step.  
In a third step, we perform
a check for the third edge to determine whether the vertices of a
selected wedge define a triangle or not.  
If any such wedge is closed, we estimate the number of triangles the
originally sampled edge contributes in to be the degree of its
lower-end vertex. If a sampled wedge is not closed, we estimate the
number of triangles the originally sampled edge contributes in to be
zero. Using these estimates, we then estimate the total number of triangles in the original graph.

Since our approach is initially based on random edge sampling, we do not need to perform a preprocessing step to obtain degree distributions of vertices in $\graph$; we only consider degrees of vertices of sampled edges $\samp{\edges}$. Furthermore, via edge sampling, we avoid being highly biased towards high degree vertices. By constructing wedges via a second sampling stage we ensure that all sampled edges turn into entities of interest. By selecting low-end of each sampled edge as wedge hinge points, we reduce the search space. Finally, by making an estimation associated with each sampled edge (i.e. the number of triangles that edge participates in), we ensure that our overall estimation correlates linearly with the sampling size, which enables us to significantly reduce sampling ratio.

Next, we discuss in detail our edge-based wedge selection and triangle estimation algorithm $\alg$. 
Pseudocode of $\alg$ is depicted in Algorithm~\ref{algo}.  Given a
graph $\graph$ and a sampling probability $p$, we first randomly sample edges
$\samp{\edges}$ as shown in lines 2--5. For any
edge $e=(u,v)\in \samp{\edges}$ selected in the first step,
let us assume $d_v \leq d_u$ without loss of generality. In lines
6--7, for any selected edge in $\samp{\edges}$, we randomly select one
more edge, say $(v,w)$ among the remaining edges incident to $v$ to
form a path or wedge $u$-$v$-$w$. In lines 8--9, we check for the
existence of the edge $(u,w)$ in the original graph.
If the original graph indeed contains the edge $(u,w)$, we increment
the total triangle count $\tau$ by $d_v - 1$.  That is, if the
sampled wedge originating from $e=(u,v)$ is closed and forms a
triangle, then we estimate $T(e)$, the number of triangles $e$
participates in, as the number of edges adjacent to $e$ at $v$, which is
equal to $d_v-1$.   The rationale of this estimation is based on a
single wedge, hence, we estimate $T(e)$ to be 0 or $d_v-1$ depending
on the outcome.
%
%
Finally, in line 10, we return $\tau / 3p$ as an estimate for $\Delta$
since a triangle can be counted by all of its three edges' estimates.
The complexity of our algorithm is $O(m)$: the subgraph sampling takes
$O(m)$ time, and the rest of the algorithm takes $O(pm)$ time.

We illustrate our algorithm $\alg$ on an example (see
Figure~\ref{fig:illustration}).  Our example graph $\graph$ contains
11 vertices, 16 edges and 5 triangles
(Figure~\ref{fig:illustration}(a)).  The triangles are $\{v_1, v_2,
v_6\}, \{v_1, v_2, v_7\}, \{v_1, v_3, v_4\}, \{v_1, v_7, v_8\},$ and
$\{v_2, v_5, v_6\}$.
Running $\alg$ with probability $p = 3/16$, let us assume that the
sampled edges $\samp{\edges}$ in the first step are $(v_2,v_5), (v_1,v_4),$ and $(v_7,v_8)$
(Figure~\ref{fig:illustration}(b)).
In the second step, we randomly select an edge incident to the low
degree vertices of these edges (Figure~\ref{fig:illustration}(c));
for $e = (v_2, v_5)$ let the randomly selected edge be $e'=(v_5,
v_4)$, for $e=(v_1,v_4)$ let $e'=(v_4,v_3)$, and for $e=(v_7,v_8)$ let
$e'=(v_8,v_1)$ (note that the last selection is with probability 1
since $d_{v_8} - 1 = 1$).
Then, we compute $\tau$ based on whether these wedges are closed or
not (Figure~\ref{fig:illustration}(d)): $v_2$$-$$v_5$$-$$v_4$ is not closed so that wedge does not
increase $\tau$, but the other two wedges $v_1$$-$$v_4$$-$$v_3$ and
$v_7$$-$$v_8$$-$$v_1$ are closed and they both increase $\tau$ by
their low degrees  - 1 respectively.  In particular,  $\tau$ gets increased
by 2 for the edge $(v_1,v_4)$ and by 1 for the edge $(v_7, v_8)$.  
Finally, $\alg$ outputs the final estimate as:
\[
\frac{\tau}{3p} = \frac{3}{3 \times \frac{3}{16}} = \frac{16}{3} \approx 5.33
\]

To compare our approach to existing approaches,
Figure~\ref{fig:others}(a) showcases a potential run of the algorithm
proposed in~\cite{Seshadhri:2014} with a sampling rate of $p = 3/16$, i.e.,
$k=3$ random wedges.  As seen in the figure, their randomly selected
wedges are $v_1$$-$$v_4$$-$$v_5$, $v_1$$-$$v_2$$-$$v_7$, and
$v_{11}$$-$$v_1$$-$$v_3$, and only one of them is closed:
$v_1$$-$$v_2$$-$$v_7$.
In this graph number of wedges $\Lambda = 56$, therefore their estimate can
be calculated as:
\[
1 \times\frac{\Lambda}{3k} = \frac{56}{9} \approx 6.22
\]

Figure~\ref{fig:others}(b) showcases a potential run of the algorithm
proposed in~\cite{Etemadi:2016}.
One can argue that since our approach and uniform wedge sampling work
with wedges, the random edge sampling method should be allowed to pick
twice the number of edges. However, note that random edge selection
requires constructing the subgraph $\samp{\graph}$, and identifying
all the wedges in $\samp{\graph}$, which can be very costly.
Practically, sampling an edge and sampling a wedge have similar costs
and algorithms should be compared based on their performance when they
sample the same number of entities (edges or wedges).

We note that when $p$ is very low, finding even one closed wedge using the algorithm in~\cite{Etemadi:2016} 
becomes very difficult even in our simple example. So we pick twice the number of
edges to be able to find some closed wedges in the sampled graph,
i.e., we use a sampling rate of $2p$. 
As seen in the figure, their sample graph contains 6 edges and these sampled edges
form 4 wedges. Out of these 4 wedges, $\samp{\Lambda}^+=3$ of them are
closed: $v_5$$-$$v_2$$-$$v_6$, $v_1$$-$$v_2$$-$$v_6$, and
$v_1$$-$$v_4$$-$$v_3$.
Noting that the probability used in their first step is $3/8$, their
estimate can be computed as:
\[
\frac{\samp{\Lambda}^+}{3p^2} = \frac{3}{3 \times (\frac{3}{8})^2} =
\frac{64}{9} \approx 7.11
\]

\begin{algorithm}[t]
\caption{$\alg$($\graph = (\vertices, \edges), p$)
}
\label{algo}
\begin{algorithmic}[1]
%
\STATE $\tau \gets 0$, $\ \samp{\edges} \gets \emptyset$
\FOR{{\bf each} edge $e \in \edges$}
    \STATE randomly select edge $e=(u,v)$ with probability $p$
    \IF{$e$ is selected}
        \STATE{$\samp{\edges} \leftarrow \samp{\edges} \cup {\{e\}}$}
    \ENDIF
\ENDFOR

\FOR{{\bf each} selected edge $e = (u,v) \in \samp{\edges}$}
    \STATE randomly select an edge $e'=(v,w)$ incident to $v$, where
$d_v \leq d_u$ and $w \neq u$
\IF{$e''=(u,w) \in \edges$}
\STATE{$\tau \gets \tau + d_v - 1$}
\ENDIF
\ENDFOR
\STATE return $\tau / 3p$
\end{algorithmic}
\end{algorithm}

\begin{figure}
\centering
    \begin{subfigure}[b]{0.22\textwidth}
        \includegraphics[width=\textwidth]{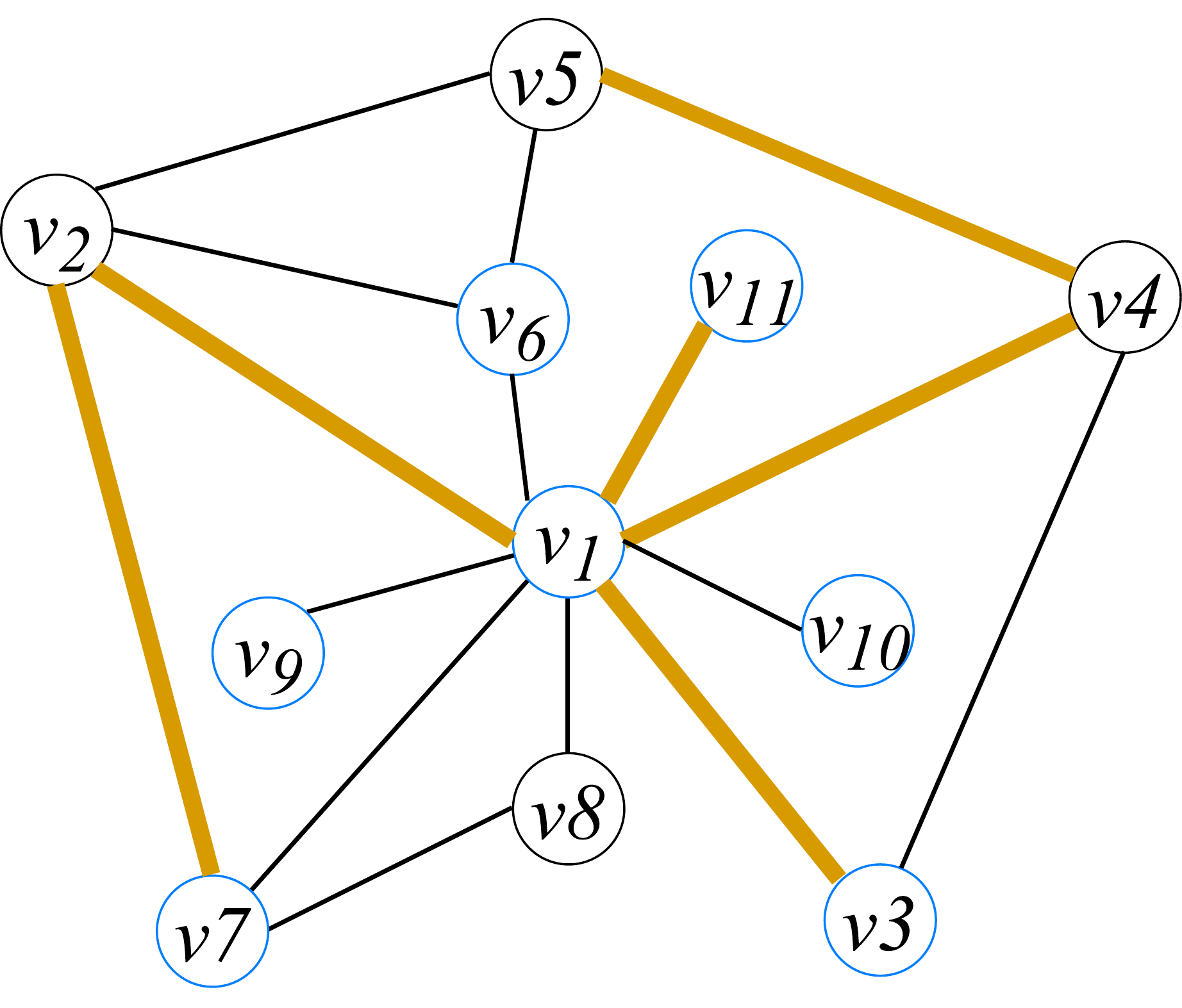}
        \caption{Random-wedge sampling algorithm proposed
in~\cite{Seshadhri:2014}. Out of the 3 sampled wedges, only 1 wedge is
closed. Estimate is $56/9$$\approx$$6.22$.} \label{fig:mouse}        
    \end{subfigure}
    ~
    \begin{subfigure}[b]{0.22\textwidth}
        \includegraphics[width=\textwidth]{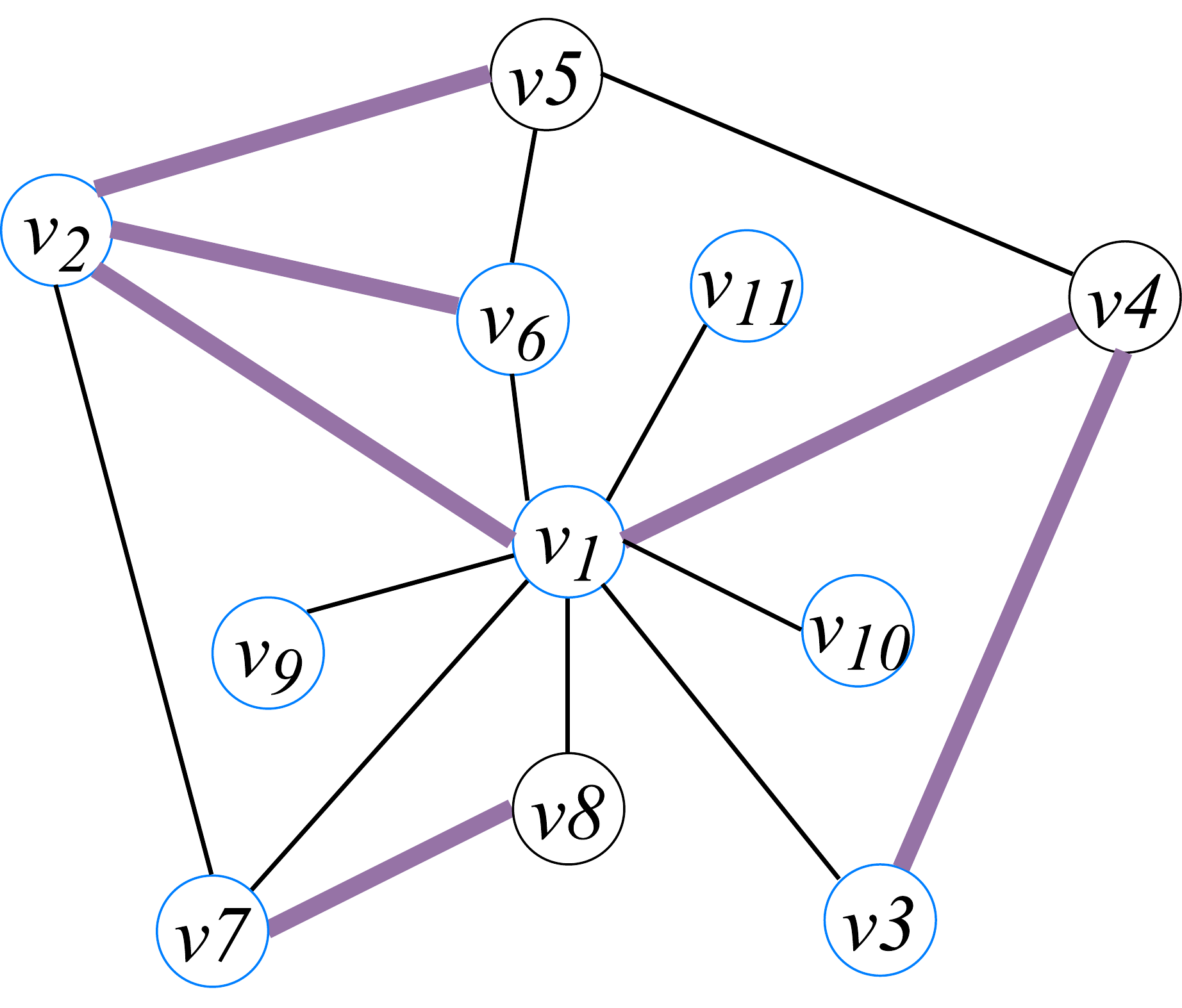}
        \caption{Random-edge sampling and wedge counting algorithm
        proposed in~\cite{Etemadi:2016}. The edges form 4 wedges, 3 of which are
        closed. Estimate is $64/9$$\approx$$7.11$.} \label{fig:mouse}
    \end{subfigure}
    \caption{Illustration of applying the state-of-the-art approaches:
(a) uniform wedge sampling approach proposed
in~\cite{Seshadhri:2014}, and
 (b) random edge sampling approach proposed
in~\cite{Etemadi:2016}.}
    \label{fig:others}
\vspace{-1.5ex}
\end{figure}


\section{Expected Value, Variance, and RSE}
\label{sec:rse}

We analyze the expected value, variance, and relative
standard error (RSE) of our estimate $\tau$ in order to investigate
the confidence interval of the outputs of $\alg$. We also analyze the
same terms for the state-of-the-art approaches to provide theoretical comparisons. 
We provide the theoretical analysis in this section, and show empirical evidences that corroborate our theory in the next section.

\subsection{Expected Value, Variance, and RSE of $\alg$}
To provide this analysis we first express our output in terms of indicator variables.
First, we number the edges in the original graph as $e_1, \ldots,
e_m$.  For any given edge $e_i$, we represent each of the $T(e_i)$
triangles that $e_i$ participates in by the wedges that hinge on the
lower degree vertex of $e_i$.  To that extent, we number each such
edge that belongs to a triangle with $e_i$ and adjacent to $e_i$ on
the lower degree vertex.  Letting these edges be $e_{i1}, \ldots,
e_{iT(e_i)}$, we define our indicator variable $w_{ij}$ as the
indicator for selecting $e_i$ in the first step and $e_{ij}$ in the
second step as the wedge for $e_i$.
Finally, defining $\delta(e_i)$ to be the degree of the lower degree
vertex of $e_i$, we express $\tau$, our estimate for $\samp{T}$ as:
\[
\tau = \sum_{i=1}^m \sum_{j=1}^{T(e_i)} w_{ij} \cdot
(\lowdeg{e_i} - 1).
\]

\noindent Using this expression, we show that the expected value of $\tau$ is
$3p\Delta$, and thus the expected value of the output of $\alg$ is
$\Delta$.

\begin{lemma}
The expected value of $\tau$ is $3p\Delta$.
\label{lem:expected}
\end{lemma}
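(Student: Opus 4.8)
The plan is to compute $E[\tau]$ directly by applying linearity of expectation to the indicator expansion
\[
\tau = \sum_{i=1}^m \sum_{j=1}^{T(e_i)} w_{ij}\,(\delta(e_i)-1).
\]
The only quantity I need is $E[w_{ij}]$, the probability that $e_i$ is chosen in the first step \emph{and} the specific triangle-completing edge $e_{ij}$ is chosen as the wedge for $e_i$ in the second step. Once this is pinned down, the computation reduces to a couple of cancellations and one combinatorial identity.

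To evaluate $E[w_{ij}]$, I would fix $e_i=(u,v)$ with $d_v\le d_u$, so that $\delta(e_i)=d_v$. The two sampling stages are independent: the first selects $e_i$ with probability $p$, and, conditioned on that selection, the second step picks one edge uniformly among the $\delta(e_i)-1$ edges incident to the lower-degree vertex $v$ other than $e_i$ itself. Hence $e_{ij}$ is picked with probability $1/(\delta(e_i)-1)$, giving $E[w_{ij}] = p/(\delta(e_i)-1)$. Substituting into the sum, the weight $\delta(e_i)-1$ cancels the denominator, so each inner term contributes exactly $p$; summing the $T(e_i)$ inner terms yields $p\,T(e_i)$ for each edge $e_i$.

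It then remains to sum over all edges, which gives $E[\tau] = p\sum_{i=1}^m T(e_i)$, and to invoke the standard identity $\sum_{i=1}^m T(e_i) = 3\Delta$ (each triangle is counted once by each of its three edges), concluding $E[\tau]=3p\Delta$. I do not anticipate a substantial obstacle here: the argument is a clean application of linearity once the per-wedge selection probability is fixed. The one subtlety worth checking is the second-stage probability, namely that it is uniform over exactly $\delta(e_i)-1$ candidate edges and that no division by zero arises; this holds because any edge whose lower-degree endpoint has degree one satisfies $T(e_i)=0$ and therefore contributes an empty inner sum, so such edges never appear in a denominator.
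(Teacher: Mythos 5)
Your proposal is correct and follows essentially the same route as the paper's proof: linearity of expectation over the indicator expansion, the key probability $\mathbb{E}(w_{ij}) = p/(\delta(e_i)-1)$, cancellation of the weight $\delta(e_i)-1$, and the identity $\sum_{i=1}^m T(e_i) = 3\Delta$. Your additional justification of the second-stage selection probability and the degree-one edge case is a slight elaboration of what the paper states without proof, but it is the same argument.
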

\begin{proof}
We have $\tau = \sum_{i=1}^m \sum_{j=1}^{T(e_i)} w_{ij}
\cdot (\lowdeg{e_i} - 1)$ for the $\alg$ algorithm.  Hence:
\[%
\mathbb{E}(\tau) = \sum_{i=1}^m \sum_{j=1}^{T(e_i)}
\mathbb{E}(w_{ij}) \cdot (\lowdeg{e_i} - 1)
\]
The probability that $w_{ij} = 1$ is $p/ (\lowdeg{e_i} -
1)$, therefore:
\[
\mathbb{E}(\tau) = \sum_{i=1}^m
\sum_{j=1}^{T(e_i)} p = \sum_{i=1}^m p T(e_i) = 3p\Delta
\]
where the latter equality holds as each triangle is counted in each of
its edges' $T(e)$ values.
\end{proof}

Next, we argue that the variance of $\tau$ is low, however, our
arguments involve some features of the original graph $\graph$.  More
specifically, the variance of $\tau$ involves the terms $K$ and
$\phi$, where $K$ is the number of pairs of triangles that share an
edge in $\graph$~\cite{Etemadi:2016}, and we define $\phi$ as $\phi =
\sum_{i=1}^\Delta \sigma(t_i) - 3$, where $\sigma(t_i)$ is the sum of
$\delta(e)$ for edges in triangle $t_i$: $\sigma(t_i) = \sum_{e \in
t_i} \delta(e)$. Intuitively, one can think of $\phi$ as the sum of
the degrees of each vertex in each triangle in $\graph$, except that
instead of summing the maximum degree vertex, the sum involves the
minimum degree vertex twice.  This is due to fact that we select the low degree
vertex of the initially selected edge as hinge point in our 
edge-based wedge construction mechanism.  

\begin{lemma}
The variance of $\tau$ is $p \phi - p^2 (3\Delta + 2K)$, where $\phi = 
 \sum_{i=1}^\Delta \sigma(t_i) - 3$.
\label{lem:variance}
\end{lemma}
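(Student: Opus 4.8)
The plan is to compute the variance directly from $\mathrm{Var}(\tau) = \mathbb{E}(\tau^2) - (\mathbb{E}(\tau))^2$, reusing $\mathbb{E}(\tau) = 3p\Delta$ from Lemma~\ref{lem:expected} so that $(\mathbb{E}(\tau))^2 = 9p^2\Delta^2$. All the work is in expanding $\mathbb{E}(\tau^2)$ from the indicator representation
\[
\tau^2 = \sum_{i,j}\sum_{k,l} w_{ij}\,w_{kl}\,(\lowdeg{e_i}-1)(\lowdeg{e_k}-1),
\]
and taking expectations term by term, so that everything reduces to evaluating $\mathbb{E}(w_{ij}w_{kl})$ in the three possible cases: $(i,j)=(k,l)$; $i=k$ with $j\neq l$; and $i\neq k$.

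First I would handle the diagonal terms $(i,j)=(k,l)$. Here $w_{ij}^2=w_{ij}$, so $\mathbb{E}(w_{ij}^2)=\mathbb{E}(w_{ij})=p/(\lowdeg{e_i}-1)$, and the factor $(\lowdeg{e_i}-1)^2$ collapses this contribution to $p\sum_i\sum_{j=1}^{T(e_i)}(\lowdeg{e_i}-1)=p\sum_i T(e_i)(\lowdeg{e_i}-1)$. Regrouping this edge sum over the triangles each edge belongs to turns it into $p\sum_{t}\sum_{e\in t}(\delta(e)-1)=p\sum_t(\sigma(t)-3)=p\phi$, which is exactly the linear-in-$p$ part of the claimed variance.

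Next I would dispatch the two families of off-diagonal terms. The crucial structural observation is that the algorithm extends each sampled edge $e_i$ to \emph{exactly one} wedge, so $w_{ij}w_{il}=0$ whenever $j\neq l$ and all ``same edge, different wedge'' cross terms vanish. For genuinely distinct edges $i\neq k$, the first-step Bernoulli$(p)$ selections are independent across edges, and the second-step wedge choices depend only on the respective hinge neighborhoods, so $w_{ij}$ and $w_{kl}$ are independent and $\mathbb{E}(w_{ij}w_{kl})=\mathbb{E}(w_{ij})\mathbb{E}(w_{kl})$. The degree factors again cancel, leaving $p^2\sum_{i\neq k}T(e_i)T(e_k)$. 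Writing this as $p^2\bigl((\sum_i T(e_i))^2-\sum_i T(e_i)^2\bigr)=p^2\bigl(9\Delta^2-\sum_i T(e_i)^2\bigr)$ via $\sum_i T(e_i)=3\Delta$, and subtracting $9p^2\Delta^2$, cancels the $9\Delta^2$ term and yields $\mathrm{Var}(\tau)=p\phi-p^2\sum_i T(e_i)^2$.

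The last step, which I expect to be the only genuinely combinatorial obstacle, is the identity $\sum_i T(e_i)^2=3\Delta+2K$. I would split $T(e_i)^2=T(e_i)+T(e_i)(T(e_i)-1)$: the first piece sums to $3\Delta$, while $\sum_i T(e_i)(T(e_i)-1)$ counts ordered pairs of distinct triangles sharing the edge $e_i$. Since two distinct triangles share at most one edge (sharing two edges forces identical vertex sets), each such ordered pair is counted exactly once across all edges, giving $2K$. Substituting $\sum_i T(e_i)^2=3\Delta+2K$ then produces the stated variance $p\phi-p^2(3\Delta+2K)$.
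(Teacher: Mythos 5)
Your proof is correct and takes essentially the same approach as the paper's: your three cases (diagonal terms, same-edge cross terms vanishing because only one wedge is drawn per sampled edge, and independence across distinct edges) correspond exactly to the paper's $var(w_{ij})$, its covariance $covar(w_{ij},w_{ik}) = -p^2/(\delta(e_i)-1)^2$, and its across-edge independence, and you close with the identical combinatorial regroupings of $\sum_i T(e_i)(\delta(e_i)-1)$ over triangles into $\phi$ and of $\sum_i T(e_i)(T(e_i)-1)$ into $2K$. The only difference is bookkeeping: you expand $\mathbb{E}(\tau^2)-(\mathbb{E}(\tau))^2$ directly and let the $9p^2\Delta^2$ term cancel, whereas the paper first splits the variance over the independent per-edge groups and then applies the variance/covariance formulas within each group.
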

\begin{proof}
Since the indicator variables $w_{i_1j_1}$ and
$w_{i_2j_2}$ are independent for $i_1 \neq i_2$, we
have:
\[
var(\tau) = \sum_{i=1}^m var \left(\sum_{j=1}^{T(e_i)}
w_{ij} \cdot (\lowdeg{e_i} - 1) \right)
\]
This can be rewritten as
\begin{align*}
var(\tau) &= \sum_{i=1}^m \left( \sum_{j=1}^{T(e_i)} (\lowdeg{e_i}
- 1)^2 \cdot var(w_{ij}) \right. \\
&+ \left. \sum_{j\neq k} (\lowdeg{e_i} - 1)^2 \cdot covar(w_{ij},
w_{ik}) \right)
\end{align*}
We have \mbox{$var(w_{ij}) = (p/(\lowdeg{e_i} - 1) -
p^2/(\lowdeg{e_i}-1)^2)$} and since we are selecting only one wedge
for each 
edge, we have  $covar(w_{ij}, w_{ik}) = (0 - p^2/(\lowdeg{e_i}-1)^2)$.
Further simplifying:
\begin{align*}
var(\tau) &= \sum_{i=1}^m \left( \sum_{j=1}^{T(e_i)}
 p \cdot (\lowdeg{e_i} - 1) - \sum_{j=1}^{T(e_i)} p^2  -  \sum_{j\neq k} p^2 \right)\\
&= p\sum_{i=1}^m T(e_i) \cdot (\lowdeg{e_i} - 1) - p^2 \sum_{i=1}^m
T(e_i)\\
& - p^2 \sum_{i=1}^m T(e_i)(T(e_i) - 1)
\end{align*}
The first term can be written as a sum over the triangles where for
each triangle, the value $\lowdeg{e} - 1$ is summed for each of its
edges.  The second term can be written in terms of $\Delta$, the
number of triangles in $\graph$.  And, the third term can be written
in terms of $K$, the number of pairs of triangles that share an edge
in $\graph$.  Finally, we prove:
\[
var(\tau) = p\left( \sum_{i=1}^\Delta \sigma(t_i) - 3\right) - p^2 (3\Delta + 2K)
\]
where $\sigma(t_i)$ is the sum of $\delta(e)$ for edges in triangle
$t_i$.
\end{proof}

In order to compare our method to previous work, we use the Relative Standard Error (RSE) values of all of the methods and base our arguments on the RSE values.
Applying the formula for the relative standard error $RSE(\tau) =
\sqrt{var(\tau)}/\mathbb{E}(\tau)$, we obtain:
\begin{align*}
RSE(\tau) &=  
\frac{\sqrt{p\phi - p^2
(3\Delta + 2K)}}{3p\Delta}=
\sqrt{\frac{\phi}{9p\Delta^2} -
\frac{3\Delta + 2K}{9\Delta^2}}
\end{align*}

The above formula is complex, however, we can approximate it by simply
dropping the negative term inside the square root.  We argue that this
will not affect our actual RSE values too much since the $p$ values
we choose for our experiments will be very low, and in addition the
actual value of the negative term, $(3\Delta + 2K)/9\Delta^2$, will be
too low for the graphs we consider making the negative term very
insignificant.  Therefore, we approximate our RSE using the
following formula:
\begin{equation}
RSE(\tau) \approx \sqrt{\frac{\phi}{9p\Delta^2}}
\label{eqn:rse-tau}
\end{equation}
where $\phi = \sum_{i=1}^\Delta \sigma(t_i) - 3$ is a constant
depending only on the original graph $\graph$.  In our experiments,
we show that we achieve RSE values that are very close to the
approximation we provide in Equation~\ref{eqn:rse-tau}.

\subsection{RSE of Other Methods}

We also analyze the expected value, variance, and the RSE of the other
methods to offer a theoretical comparison with our method.
We consider two state-of-the-art methods, the method introduced by Etemadi et al.~\cite{Etemadi:2016}, which we call $\alges$, as well as the method introduced by Seshadhri at al.~\cite{Seshadhri:2014}, which we call $\algws$.

We begin with the analysis of $\algws$, the method introduced by
Seshadhri et al.~\cite{Seshadhri:2014}.  We define indicator variables
for each of the $3\Delta$ closed wedges in order to express the random
variable that counts the number of closed wedges observed in $\algws$.
Letting $w_i$ to be an indicator for the $i^{th}$ closed wedge,
indicator for whether it is one of the randomly selected $k = pm$
wedges, we define:

\[
\omega = \sum_{i=1}^{3\Delta} w_i
\]

Using $\omega$, $\algws$ outputs $\omega \Lambda / 3k$ to estimate
$\Delta$.  In the next lemma, we provide the expected value and the
variance analysis of $\omega$ to be used in the relative standard
error analysis of $\omega$, where $C = 3\Delta / \Lambda$ is the
clustering coefficient:

\begin{lemma}
The expected value of $\omega$ is $p mC$, and the variance of $\omega$
is $pm C(1-C) \left(1 - \frac{p m - 1}{\Lambda - 1}\right)$.
\end{lemma}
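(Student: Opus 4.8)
The plan is to recognize that $\omega$ is exactly the count of ``successes'' in a hypergeometric sampling experiment: we draw a subset of $k = pm$ wedges uniformly at random (without replacement) from the $\Lambda$ wedges of $\graph$, of which exactly $3\Delta$ are closed, and $w_i$ records whether the $i$-th closed wedge lands in this subset. The finite-population correction factor $1 - \frac{pm-1}{\Lambda-1}$ appearing in the claimed variance is the tell-tale sign of without-replacement sampling, so the whole argument reduces to the mean and variance of a hypergeometric random variable specialized to population size $\Lambda$, success count $3\Delta$, and sample size $pm$.

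For the expectation, I would argue by symmetry that every individual wedge is equally likely to be among the $k$ selected ones, so $P(w_i = 1) = k/\Lambda = pm/\Lambda$ for each $i$. Linearity of expectation then gives
\[
\mathbb{E}(\omega) = \sum_{i=1}^{3\Delta} P(w_i = 1) = 3\Delta \cdot \frac{pm}{\Lambda} = pm \cdot \frac{3\Delta}{\Lambda} = pmC,
\]
using the definition $C = 3\Delta / \Lambda$.

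For the variance, I would follow the same indicator decomposition used in the proof of Lemma~\ref{lem:variance}, writing $var(\omega) = \sum_{i} var(w_i) + \sum_{i \neq j} covar(w_i, w_j)$. Each $w_i$ is Bernoulli with $var(w_i) = \frac{pm}{\Lambda}\left(1 - \frac{pm}{\Lambda}\right)$. For $i \neq j$, without-replacement sampling gives the pairwise inclusion probability $P(w_i = 1, w_j = 1) = \frac{pm(pm-1)}{\Lambda(\Lambda-1)}$, so that $covar(w_i, w_j) = \frac{pm(pm-1)}{\Lambda(\Lambda-1)} - \frac{(pm)^2}{\Lambda^2}$, which is negative. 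Summing the $3\Delta$ diagonal terms and the $3\Delta(3\Delta-1)$ off-diagonal terms and collecting over the common denominator $\Lambda^2(\Lambda-1)$ should collapse to the standard hypergeometric variance $pm \cdot C(1-C) \cdot \frac{\Lambda - pm}{\Lambda - 1}$; rewriting $\frac{\Lambda - pm}{\Lambda-1} = 1 - \frac{pm-1}{\Lambda-1}$ then matches the claimed expression exactly.

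The main obstacle is not conceptual but bookkeeping: the covariance terms must be combined carefully so that the $3\Delta$ and $(3\Delta)^2$ contributions recombine into the clean product form, and a single sign or factor slip there would break the simplification. A secondary point worth stating explicitly is the modelling assumption that the $k$ wedges are drawn \emph{without} replacement, since this is precisely what justifies the pairwise probability $\frac{pm(pm-1)}{\Lambda(\Lambda-1)}$ and hence the correction factor; had the wedges instead been sampled independently with replacement, $\omega$ would be binomial and the factor $1 - \frac{pm-1}{\Lambda-1}$ would disappear.
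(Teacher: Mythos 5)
Your proposal is correct and takes essentially the same route as the paper's own proof: the identical indicator decomposition, the same inclusion probabilities $P(w_i=1)=k/\Lambda$ and $P(w_i=1,w_j=1)=\frac{k(k-1)}{\Lambda(\Lambda-1)}$, and the same variance-plus-covariance sum, with your without-replacement modelling assumption matching the paper's, as its covariance term $\frac{k}{\Lambda}\bigl(\frac{k-1}{\Lambda-1}-\frac{k}{\Lambda}\bigr)$ confirms. The only difference is cosmetic: you explicitly name $\omega$ as hypergeometric and invoke the standard variance $kC(1-C)\frac{\Lambda-k}{\Lambda-1}$ (correctly rewritten as $1-\frac{pm-1}{\Lambda-1}$), which shortcuts the paper's final algebraic simplification without changing the argument.
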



\begin{proof}
We have \mbox{$\mathbb{E}(\omega) = \sum_{i=1}^{3\Delta}
\mathbb{E}(w_i) = \sum_{i=1}^{3\Delta} k / \Lambda = 3\Delta k /
\Lambda$}, which implies $\mathbb{E}(\omega) = p m C$ by definition.

For variance, we have $var(w_i) = \frac{k}{\Lambda} (1 -
\frac{k}{\Lambda})$.  Also, 
we have
$covar(w_j,w_k) = 
 \frac{k}{\Lambda} (\frac{k-1}{\Lambda -
1} -\frac{k}{\Lambda})$
for any $w_j \neq w_k$. Therefore, 
 we write variance as:
\begin{align*}
var(\omega) &= \sum_{i=1}^{3\Delta} var(r_i) + \sum_{j \neq k} covar(r_j, r_k) \\
&= 3\Delta \frac{k}{\Lambda} \left(1 - \frac{k}{\Lambda}\right) 
+ 3\Delta ( 3\Delta - 1) \frac{k}{\Lambda} \left(\frac{k-1}{\Lambda -
1} -\frac{k}{\Lambda}\right)\\
&= kC \left(1 - \frac{k}{\Lambda}\right) 
+ kC ( 3\Delta - 1) \left(\frac{k-1}{\Lambda - 1} -\frac{k}{\Lambda}\right)\\
&= kC \left(1 - \frac{k - 1}{\Lambda - 1}
+  3\Delta \left(\frac{k-1}{\Lambda - 1} - \frac{k-1}{\Lambda} -
\frac{1}{\Lambda}\right) \right) \\
&= kC \left(1 - \frac{k - 1}{\Lambda - 1}
+  3\Delta \frac{k - 1}{\Lambda(\Lambda - 1)} - 3\Delta
\frac{1}{\Lambda}\right)
\end{align*}

The last equation simplifies as $pmC (1 - C) \left(1 - \frac{p m -
1}{\Lambda - 1}\right)$ proving the variance stated in the lemma.
\end{proof}

Using this lemma, we obtain the formula for their relative
standard error $RSE(\omega) = \sqrt{var(\omega)}/\mathbb{E}(\omega)$:

\begin{align*}
RSE(\omega) &=  \frac
{\sqrt{pmC (1 - C) \left(1 - \frac{p m - 1}{\Lambda - 1}\right)}}{pmC}\\
 &=  \sqrt{\frac{1-C}{pmC} \left(1 - \frac{p m - 1}{\Lambda - 1}\right)}
\end{align*}

Similarly, we can drop the insignificant negative terms from the above
formula and get an approximation with fewer simpler terms:
\begin{equation}
RSE(\omega) \approx \sqrt{\frac{1-C}{p m C}}
\label{eqn:rse-omega}
\end{equation}

\begin{table*}
\caption{Datasets used in experiments and their features.}
\label{tbl:datasets}
\centering
\begin{tabular}{l r r r r r r r l}
Dataset & $n$ & $m$ & $\Delta$ &  $C$$=$$\frac{3\Delta}{\Lambda}$ &
$\frac{3\Delta}{m}$ & $\frac{\phi}{3\Delta}$ & $\frac{K}{\Delta}$ & Description 
\vspace{1mm}
\\
\hline  
Ego-Facebook~\cite{Leskovec:2015} &      4K &      88K &     1612K  &   0.5192 &       54.8 &      115.5 &   141.9 & Online social network (OSN) in Facebook \\
Enron-email~\cite{Kunegis:2013}   &     36K &     183K &      727K  &   0.0853 &       11.9 &      102.1 &    50.2 & Email communication network in Enron \\
Brightkite~\cite{Leskovec:2015}   &     58K &     214K &      494K  &   0.1106 &        6.9 &       76.6 &    59.1 & OSN in Brightkite \\
Dblp-coauthor~\cite{Leskovec:2015}&    317K &    1049K &     2224K  &   0.3064 &        6.4 &       37.6 &    47.2 & Co-authorship network in DBLP\\
Amazon~\cite{Leskovec:2015}       &    334K &     925K &      667K  &   0.2052 &        2.2 &        7.1 &     5.3 & Co-purchasing network from Amazon \\
Web-NotreDame~\cite{Kunegis:2013} &    325K &    1090K &     8910K  &   0.0877 &       24.5 &      119.4 &   174.2 & Web graph of Notre Dame \\
Citeseer~\cite{Kunegis:2013}      &    384K &    1736K &     1351K  &   0.0496 &        2.3 &       26.6 &    11.6 & Citation network in Citeseer \\
Dogster~\cite{Kunegis:2013}       &    426K &    8543K &    83499K  &   0.0143 &       29.3 &     1085.2 &   503.8 & OSN from dogster.com website \\
Web-Google~\cite{Kunegis:2013}    &    875K &    4322K &    13391K  &   0.0552 &        9.3 &       35.4 &    46.4 & Web graph from Google \\
YouTube~\cite{Leskovec:2015}      &   1134K &    2987K &     3056K  &   0.0062 &        3.1 &      257.7 &    82.4 & OSN in Youtube \\
DBLP~\cite{Kunegis:2013}          &   1314K &    5362K &    12184K  &   0.1703 &        6.8 &       38.0 &    35.8 & Co-authorship network in DBLP \\
As-skitter~\cite{Leskovec:2015}   &   1696K &   11095K &    28769K  &   0.0054 &        7.8 &      796.4 &   713.3 & Internet connections from Skitter project \\
Flicker~\cite{Kunegis:2013}       &   2302K &   22838K &   837605K  &   0.1076 &      110.0 &     1239.9 &   732.8 & Online social network in Flicker \\
Orkut~\cite{Kunegis:2013}         &   3072K &  117185K &   627584K  &   0.0413 &       16.1 &      371.3 &   106.9 & Online social network in Orkut \\
LiveJournal~\cite{Leskovec:2015}  &   3997K &   34681K &   177820K  &   0.1253 &       15.4 &      204.3 &   222.1 & OSN in LiveJournal \\
Orkut2~\cite{Boldi:2011}          &  11514K &  327036K &   223127K  &   0.0003 &        2.0 &     1229.0 &   155.4 & OSN in Orkut \\
Web-Arabic~\cite{Boldi:2011}      &  22743K &  553903K & 36895360K  &   0.0313 &      199.8 &     2047.7 &  3042.7 & Web graph from Arabian countries \\
MicrosoftAG~\cite{MicrosoftAG:2015}&  46742K &  528463K &   578188K &   0.0151 &        3.3 &       97.1 &    33.9 & Citation network from Microsoft Academic \\
Twitter~\cite{Kunegis:2013}        &  41652K & 1202513K & 34824916K &   0.0008 &       86.9 &    11638.5 &  5061.5 & OSN from Twitter \\
Friendster~\cite{Kunegis:2013}     &  65608K & 1806067K &  4173724K &   0.0174 &        6.9 &      311.6 &    44.4 & OSN of website Friendster \\
\hline
\end{tabular}
\end{table*}

In the next section, we provide empirical evidence that verifies the
above approximation given in Equation~\ref{eqn:rse-omega}.

Next, we analyze $\alges$, the method proposed by Etemadi et
al.~\cite{Etemadi:2016}.  Once again, we define an indicator variable
for each of the $3\Delta$ closed wedges in order to express the closed
wedge count of this method.  Letting $r_i$ to be an indicator for the
$i^{th}$ closed wedge, indicator for whether it is included in
$\samp{\graph}$ in the first step of $\alges$, we define:

\[
\rho = \sum_{i=1}^{3\Delta} r_i
\]

Using $\rho$, $\alges$ outputs $\rho / 3p^2$ to estimate $\Delta$.  In
the next lemma, we provide the expected value and variance of $\rho$
based on the discussion from their original work~\cite{Etemadi:2016}:

\begin{lemma}
The expected value of $\rho$ is $3p^2\Delta$, and the variance of
$\rho$ is $3\Delta(p^2-p^4)+8K(p^3-p^4)$.
\end{lemma}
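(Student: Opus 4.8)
The plan is to follow exactly the template used for $\tau$ in Lemmas~\ref{lem:expected} and~\ref{lem:variance}: treat $\rho = \sum_{i=1}^{3\Delta} r_i$ as a sum of Bernoulli indicators, obtain $\mathbb{E}(\rho)$ from linearity, and expand $var(\rho) = \sum_{i} var(r_i) + \sum_{j \neq k} covar(r_j, r_k)$ into a count of structural features of $\graph$. A single closed wedge is included in $\samp{\graph}$ precisely when both of its two arm-edges are sampled; since edges enter $\samp{\graph}$ independently with probability $p$, we have $\mathbb{E}(r_i) = p^2$, hence $\mathbb{E}(\rho) = 3p^2\Delta$, and $var(r_i) = p^2 - p^4$, so the diagonal part of the variance is $3\Delta(p^2 - p^4)$.

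The substance is the off-diagonal sum. For two distinct closed wedges, $covar(r_i, r_k) = \mathbb{E}(r_i r_k) - p^4$, and by edge-independence this is zero unless the two wedges share an edge. Two distinct wedges can share at most one edge---two common edges would force a common hinge and thus coincide---so whenever they overlap their union has exactly three distinct edges, giving $\mathbb{E}(r_i r_k) = p^3$ and $covar(r_i, r_k) = p^3 - p^4$. Consequently the off-diagonal sum is $(p^3 - p^4)$ times the number of ordered pairs of distinct closed wedges sharing an edge, and the whole lemma reduces to counting those pairs and matching the count to $8K$.

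The main obstacle is this combinatorial count. I would classify each edge-sharing pair by the shared edge $e$ and the triangles the two wedges close through $e$: the relevant pairs are those whose wedges hinge on the endpoints of $e$ and close into triangles both containing $e$. Pairs whose wedges live in two \emph{distinct} triangles sharing $e$ are governed directly by $K$, and I would check that each such triangle pair contributes exactly the number of ordered wedge pairs needed to produce the coefficient $8$. The delicate point I expect to scrutinize is the contribution of wedge pairs drawn from a \emph{single} triangle: the three closed wedges of one triangle are pairwise edge-sharing, so they also generate edge-sharing pairs, and I would need to verify how these are reconciled with the stated $8K(p^3 - p^4)$ term---this is precisely where the final constant multiplying $(p^3 - p^4)$ is decided.
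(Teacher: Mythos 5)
Your overall strategy (indicator variables, linearity for the mean, and a covariance expansion in which only edge-sharing wedge pairs contribute $p^3 - p^4$) is sound, and every intermediate step you state is correct: $\mathbb{E}(r_i) = p^2$ hence $\mathbb{E}(\rho) = 3p^2\Delta$, the diagonal contribution $3\Delta(p^2 - p^4)$, the fact that two distinct wedges can share at most one edge, and the count of $8$ ordered wedge pairs per pair of \emph{distinct} edge-sharing triangles. For calibration: the paper itself never proves this lemma; it imports the statement from Etemadi et al.\ ("based on the discussion from their original work"), so your derivation already goes beyond what the paper supplies.

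However, the "delicate point" you postponed is not a detail that can be reconciled --- it is a genuine discrepancy, and your proof, if completed honestly, refutes the stated formula rather than proves it. The three closed wedges of a single triangle are pairwise edge-sharing, so each triangle contributes $6$ ordered pairs with covariance $p^3 - p^4$, and this within-triangle contribution of $6\Delta(p^3-p^4)$ appears nowhere in the stated variance, since $K$ counts only pairs of distinct triangles. What your argument actually yields is
\[
var(\rho) = 3\Delta(p^2-p^4) + \left(6\Delta + 8K\right)(p^3-p^4),
\]
and the extra term is real: take $\graph$ to be a single triangle ($\Delta = 1$, $K = 0$) with $p = 1/2$; enumerating the eight outcomes gives $\mathbb{E}(\rho) = 3/4$ and $var(\rho) = 3p^2 + 6p^3 - 9p^4 = 15/16$, matching the corrected formula, whereas the lemma's formula gives $3(p^2 - p^4) = 9/16$. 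In effect the stated lemma treats the three wedge indicators of one triangle as independent, which they are not. The omission is numerically harmless in the paper's regime --- in all of its datasets $K \geq 5\Delta$, so $6\Delta \ll 8K$, and the term is suppressed by a factor of $p$ relative to the leading $3\Delta p^2$ term --- which explains why the empirical RSE validation does not expose it, but a proof along your (correct) lines cannot terminate at the lemma as written.
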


Using this lemma, we obtain the formula for their relative
standard error $RSE(\rho) = \sqrt{var(\rho)}/\mathbb{E}(\rho)$:

\begin{align*}
RSE(\rho) &=  \frac{\sqrt{3\Delta(p^2-p^4)+8K(p^3-p^4)}}{3p^2\Delta} \\
 &=  \sqrt{\frac{1}{3p^2\Delta} - \frac{1}{3\Delta} +
           \frac{8K}{9p\Delta^2} - \frac{8K}{9\Delta^2}} 
\end{align*}

Once again, we can drop the insignificant negative terms from the
above formula to get an approximation with fewer simpler terms.
\begin{equation}
RSE(\rho) \approx \sqrt{\frac{1}{3p^2\Delta} + \frac{8K}{9p\Delta^2}}
\label{eqn:rse-rho}
\end{equation}

Note that the approximation in Equation~\ref{eqn:rse-rho} is different than
the one in proposed in the work of Etemadi et al.~\cite{Etemadi:2016}
and the main reason to include the extra term is to ensure a much
better approximation.  In the next section, we verify the validity of
our approximation schemes and argue that we can simply use the
theoretical approximations to compare the performance of the three methods.


\section{Evaluation}
\label{sec:eval}

In this section, we provide an empirical and theoretical evaluation of the proposed algorithm.

\subsection{Experimental Setup}
We compare the performance of our edge-based wedge sampling algorithm
(EWS) with the state-of-the-art 
edge sampling (ES) and wedge sampling (WS) approaches proposed by Etemadi et al.~\cite{Etemadi:2016} and Seshadhri et
al.~\cite{Seshadhri:2014} both theoretically and experimentally using graphs modeling real-world large-scale networks. 
We also corroborate the theoretical analysis we provided in Section~\ref{sec:rse} with empirical evidence. In the forthcoming experiments and analysis, we sample the same number of entities for the three different approaches; edges for ES, and wedges for WS and EWS. All approaches perform closed wedge checks as necessary; however, we do not account for the cost of those checks.  In practice, the costs are almost exactly the same for WS and EWS, however, ES performs more checks as it considers closed wedge checks for any pair of adjacent edges in the sampled subgraph.

For comparison and also in order to verify our theoretical analysis, we observe the relative standard error ($RSE$) values of different algorithms for each of the datasets and compare these values with the theoretical analysis we provided in Section~\ref{sec:rse}. 

In order to provide comparable results, as was done in~\cite{Etemadi:2016}, we either fix the sampling probability, and report the corresponding RSE values of the three algorithms, or fix an RSE value and report the corresponding sampling rates. 
We calculate the experimental $RSE$ values observed with $k$ different runs of the algorithms using the formula:
\[
RSE = \frac{\sqrt{\frac{1}{k}\sum_{i=1}^k(\Delta_i - \mu)}}{\Delta},
\]
where $\Delta_i$ is the estimate obtained in the $i$-th run and $\mu$
is the mean of all of the $k$ runs, i.e., $\mu =
\frac{1}{k}\sum_{i=1}^k \Delta_i$.  In all of our experiments, we
use $k=1000$.


The experiments are conducted on a server with 16 CPUs and 64 GB of RAM.
In our experimental analysis we use 20 real-world datasets that were also used in~\cite{Etemadi:2016}. These datasets vary in size; largest containing vertices in the order of ten millions and triangles in the order of ten billions, and the smallest
containing vertices in the order of thousands and triangles in the order of millions. Features of these datasets are presented in Table~\ref{tbl:datasets}. 

In addition to graph features, we also provide metrics of graphs that impact the performance of ES, WS and EWS in Table~\ref{tbl:datasets} as well. For example the higher the value of global clustering coefficient $C=3\Delta/\Lambda$, the better the performance of WS, as the chances of finding closed wedges during sampling increase.
Similarly, $3\Delta/m$ indicates the number of triangles an average edge participates in a graph, and the higher this number the better the performance of EWS in general. EWS performance also gets impacted from the value of $\phi/3\Delta$ as well, which can be considered as the degree of the lowest degree or medium degree vertex in an average triangle. The higher this number gets, the courser the estimations made by EWS, hence the worse its performance. A similar argument can be made for ES and $K/\Delta$.  

%


\subsection{Empirical Analysis and Verification of Theory}

Figure~\ref{fig:rse}, provides an empirical analysis of ES, WS, and
EWS over the 20 datasets listed in Table~\ref{tbl:datasets}. The
figure also depicts the RSE estimations indicated by Equations~\ref{eqn:rse-tau},~\ref{eqn:rse-omega}, and~\ref{eqn:rse-rho} for EWS, WS, and ES, respectively. In this figure, the $x$-axis ranges between the sampling probabilities that provides $RSE$ values between 0.50 and 0.04 for EWS and we report the corresponding RSE values for ES, WS, and EWS when using the same sampling rates.  

For all datasets depicted in Figure~\ref{fig:rse}, the empirical RSE
observations we make for EWS, WS, and ES match the theoretical
bounds EWS-est, WS-est, and ES-est that we compute using
Equations~\ref{eqn:rse-tau},~\ref{eqn:rse-omega}, and~\ref{eqn:rse-rho}, respectively. This proves the validity of our theoretical analysis.

Interestingly, we can observe in Figure~\ref{fig:rse} that the
datapoints for EWS and WS follow a similar pattern and they exhibit
parallel trends irrespective of the sampling ratio. 
In fact, this is to be expected as the theoretical $RSE$ approximations
in Equations~\ref{eqn:rse-tau} and \ref{eqn:rse-omega} both depend
linearly on the sampling ratio and hence the estimations EWS-est and
WS-est are parallel.  Therefore, the ratio of their RSE values
remain the same accross various sampling probabilities.
Whichever of these two algorithms will perform better is
based on the parameters of the graph.
In 14 of the datasets EWS performs better than WS and in
6 datasets WS performs better than EWS. We note that since this is a
log-log figure, the slight difference between the two lines in fact
indicate a multiplicative difference.

In Figure~\ref{fig:rse}, for all datasets and most $p$ values, EWS and
WS perform significantly better than ES. On the other hand, the slope
for ES-est is much steeper---also to be expected because of the
quadratic dependence on the sampling ratio---and even though ES provides much higher
RSE values for low $p$ values, as $p$ increases, the accuracy of ES
increases significantly.

\begin{figure*}
    \centering
    \begin{subfigure}[b]{0.245\textwidth}
        \includegraphics[width=\textwidth]{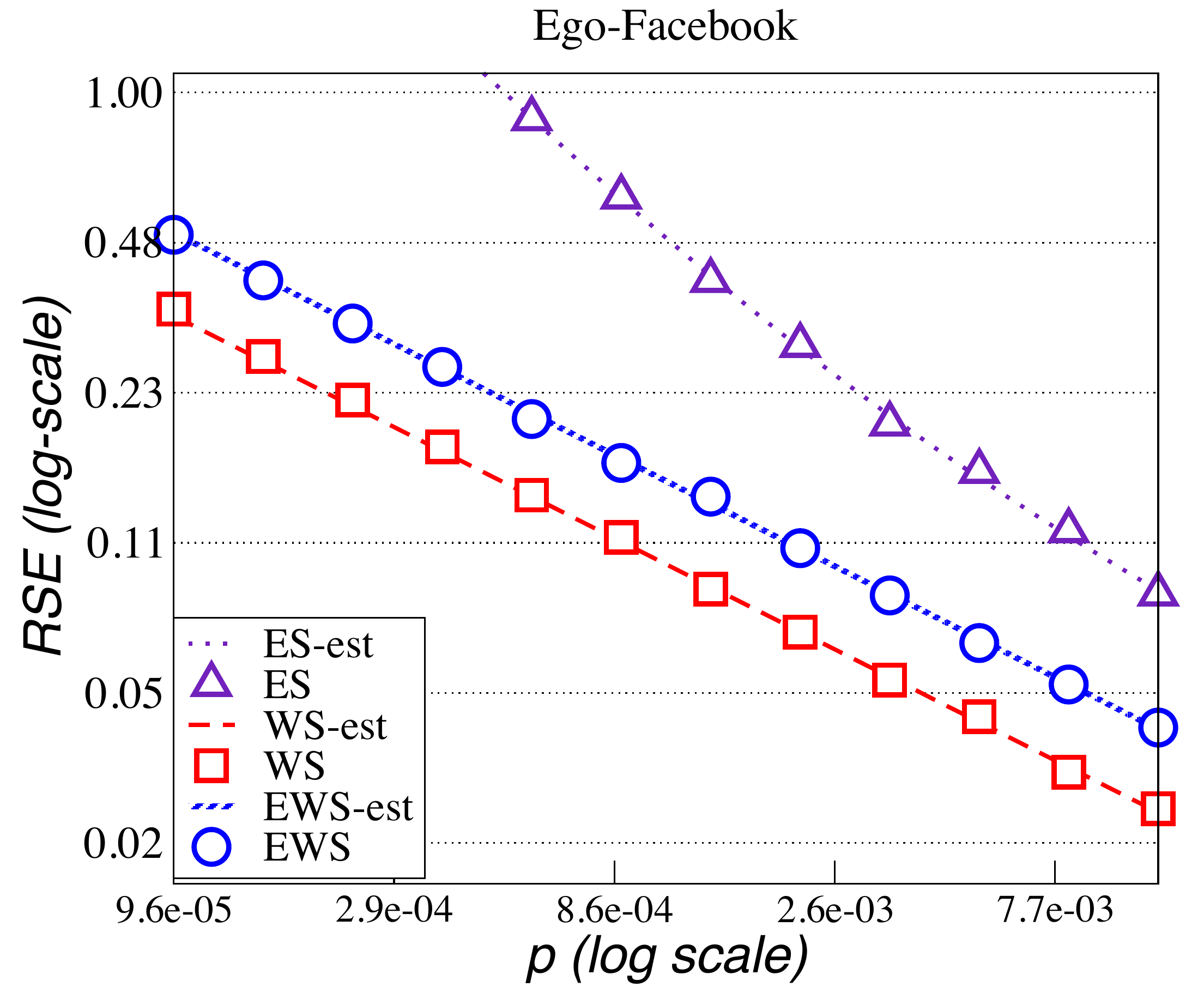}
        \caption{Ego-Facebook.}
        \label{fig:gull}
    \end{subfigure}
    \begin{subfigure}[b]{0.245\textwidth}
        \includegraphics[width=\textwidth]{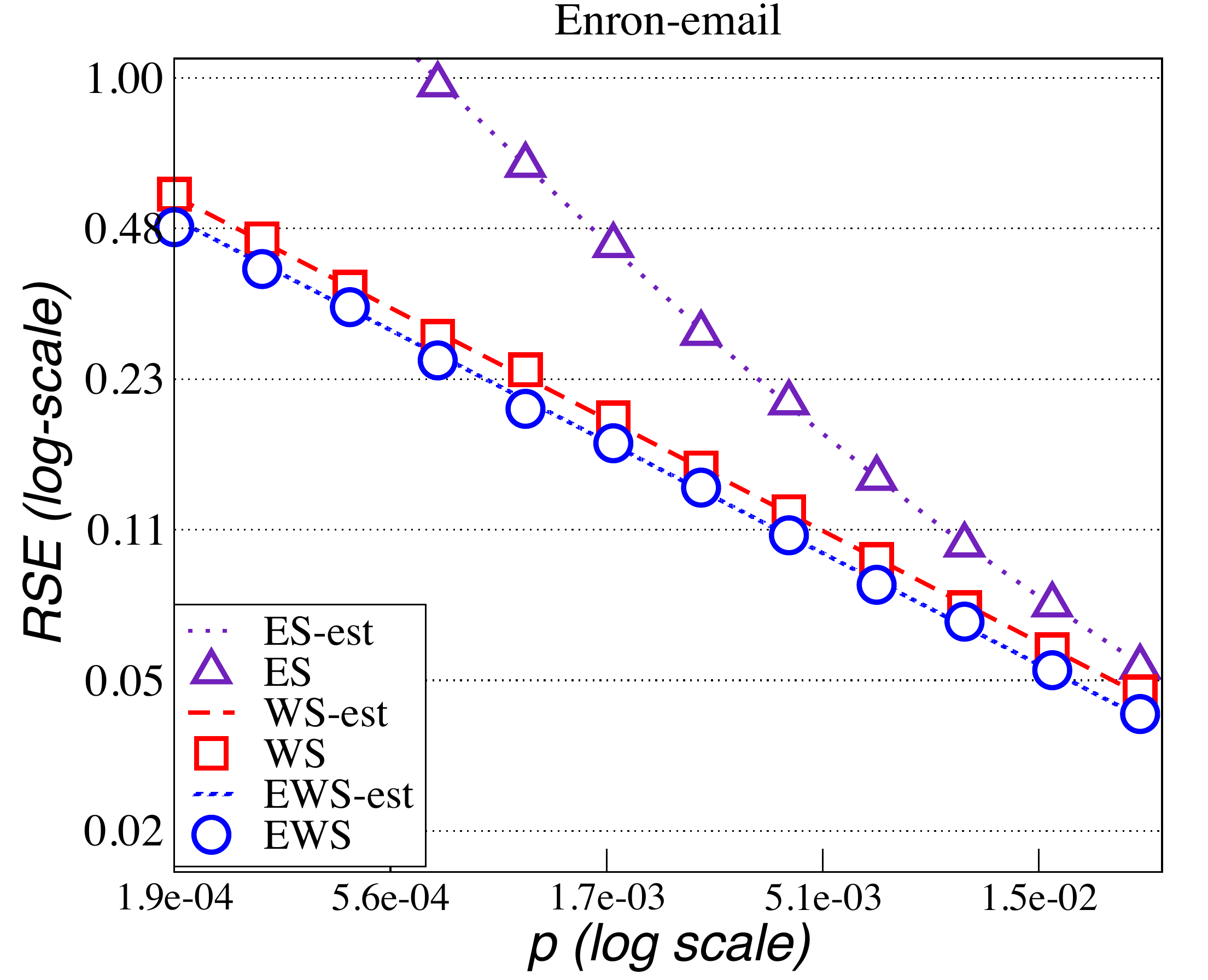}
        \caption{Enron-email.}
        \label{fig:gull}
    \end{subfigure}
    \begin{subfigure}[b]{0.245\textwidth}
        \includegraphics[width=\textwidth]{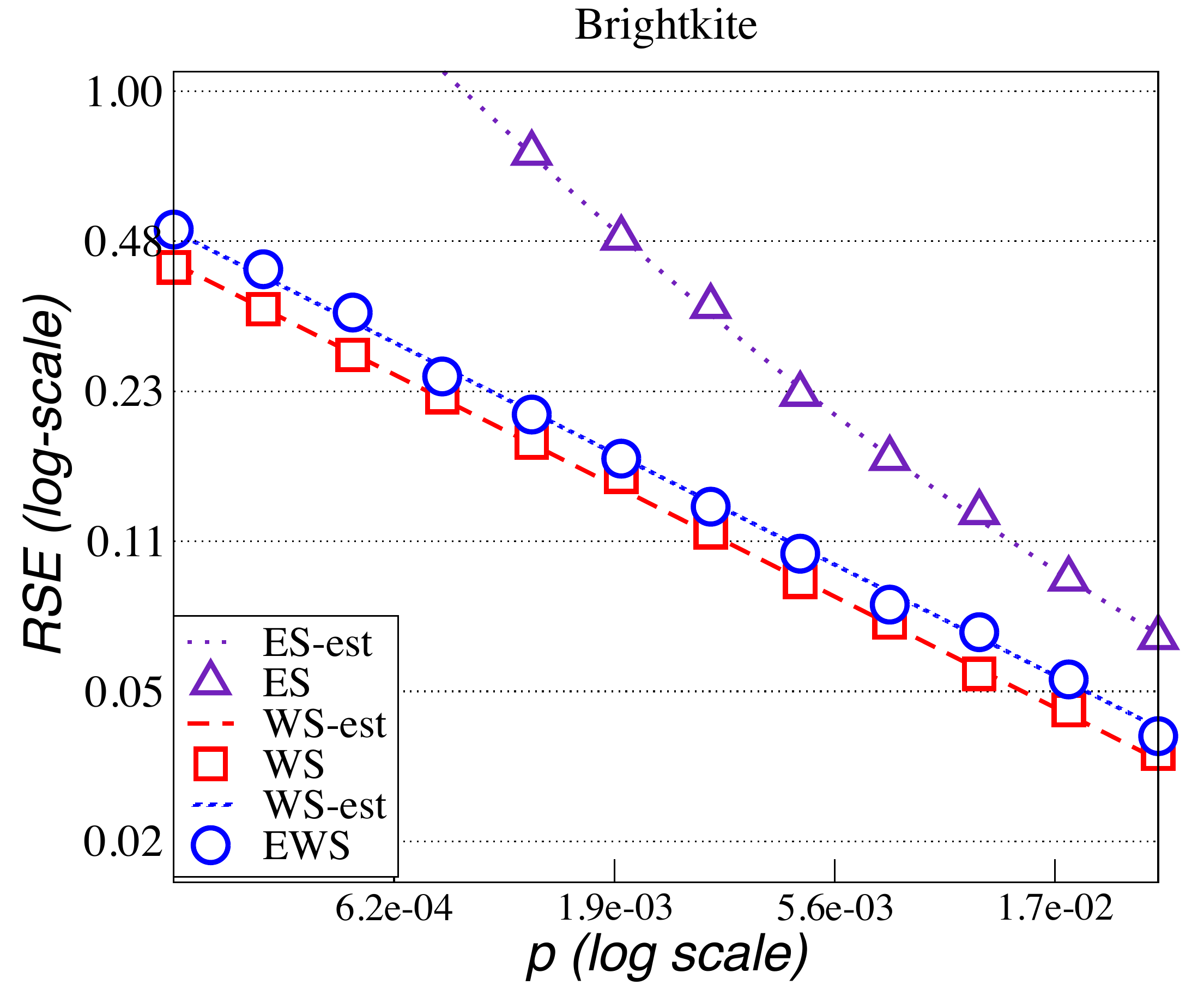}
        \caption{Brightkite.}
        \label{fig:gull}
    \end{subfigure}
    \begin{subfigure}[b]{0.245\textwidth}
        \includegraphics[width=\textwidth]{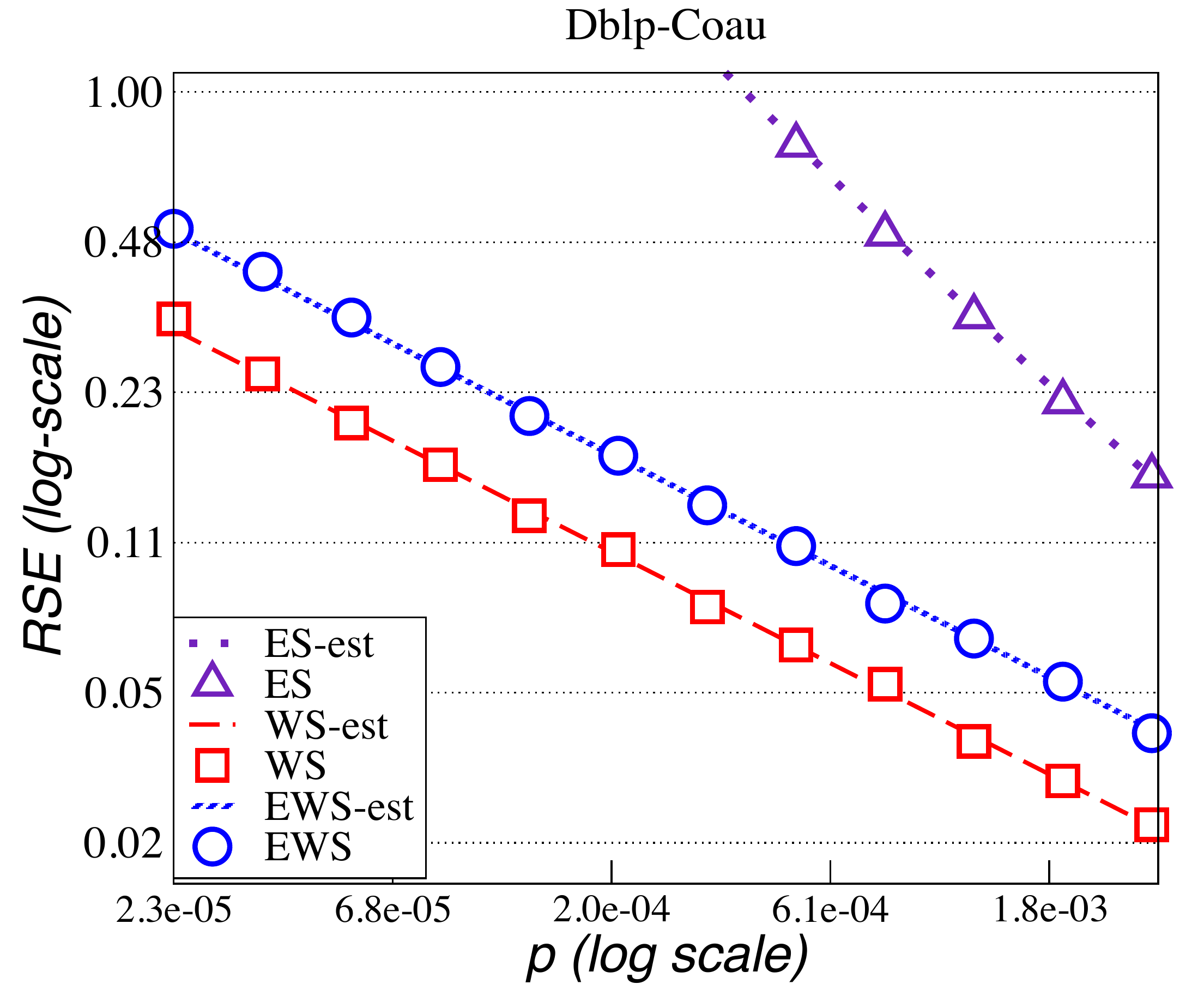}
        \caption{Dblp-Coau.}
        \label{fig:gull}
    \end{subfigure}
    \begin{subfigure}[b]{0.245\textwidth}
        \includegraphics[width=\textwidth]{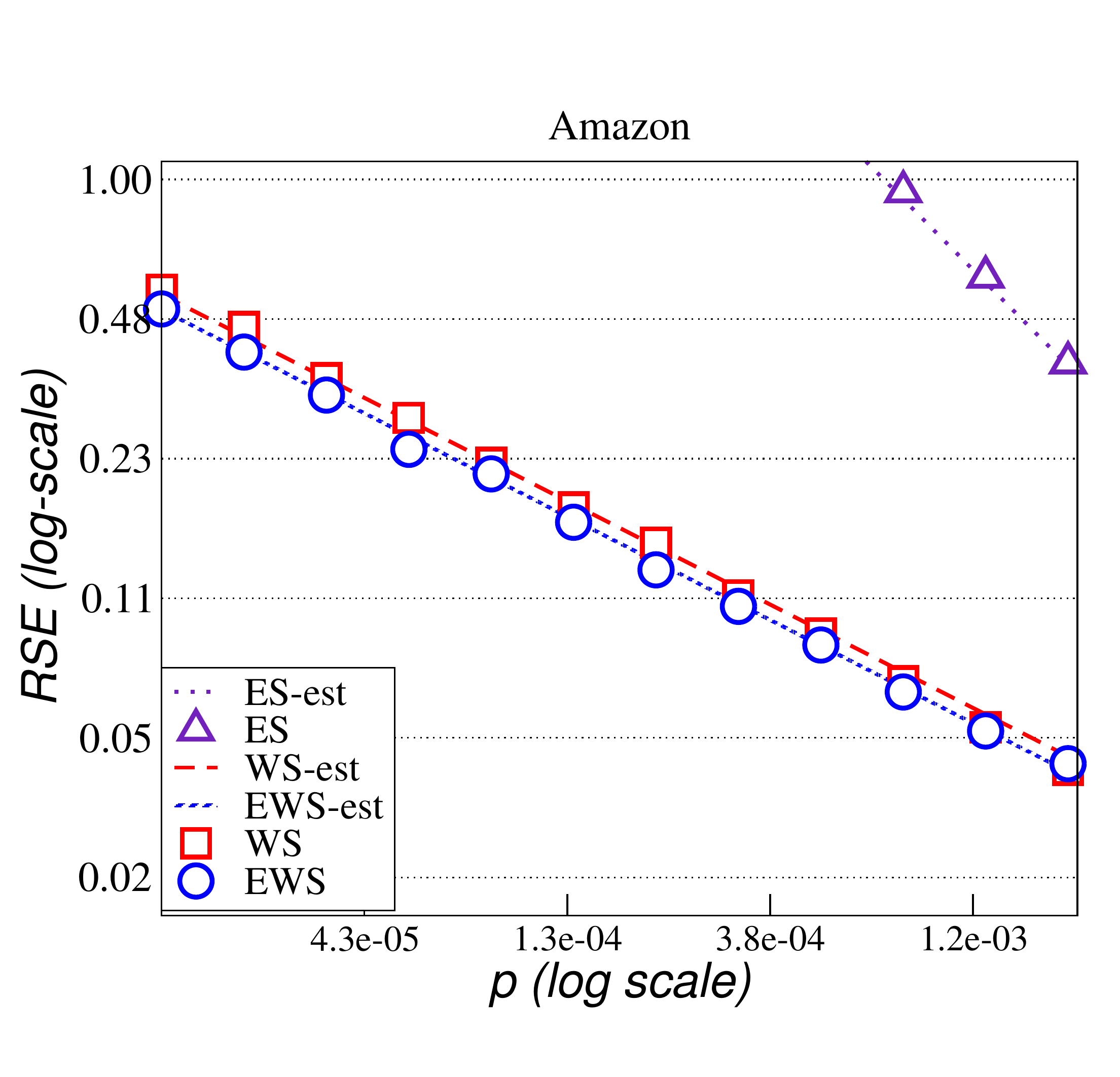}
        \caption{Amazon.}
        \label{fig:gull}
    \end{subfigure}
    \begin{subfigure}[b]{0.245\textwidth}
        \includegraphics[width=\textwidth]{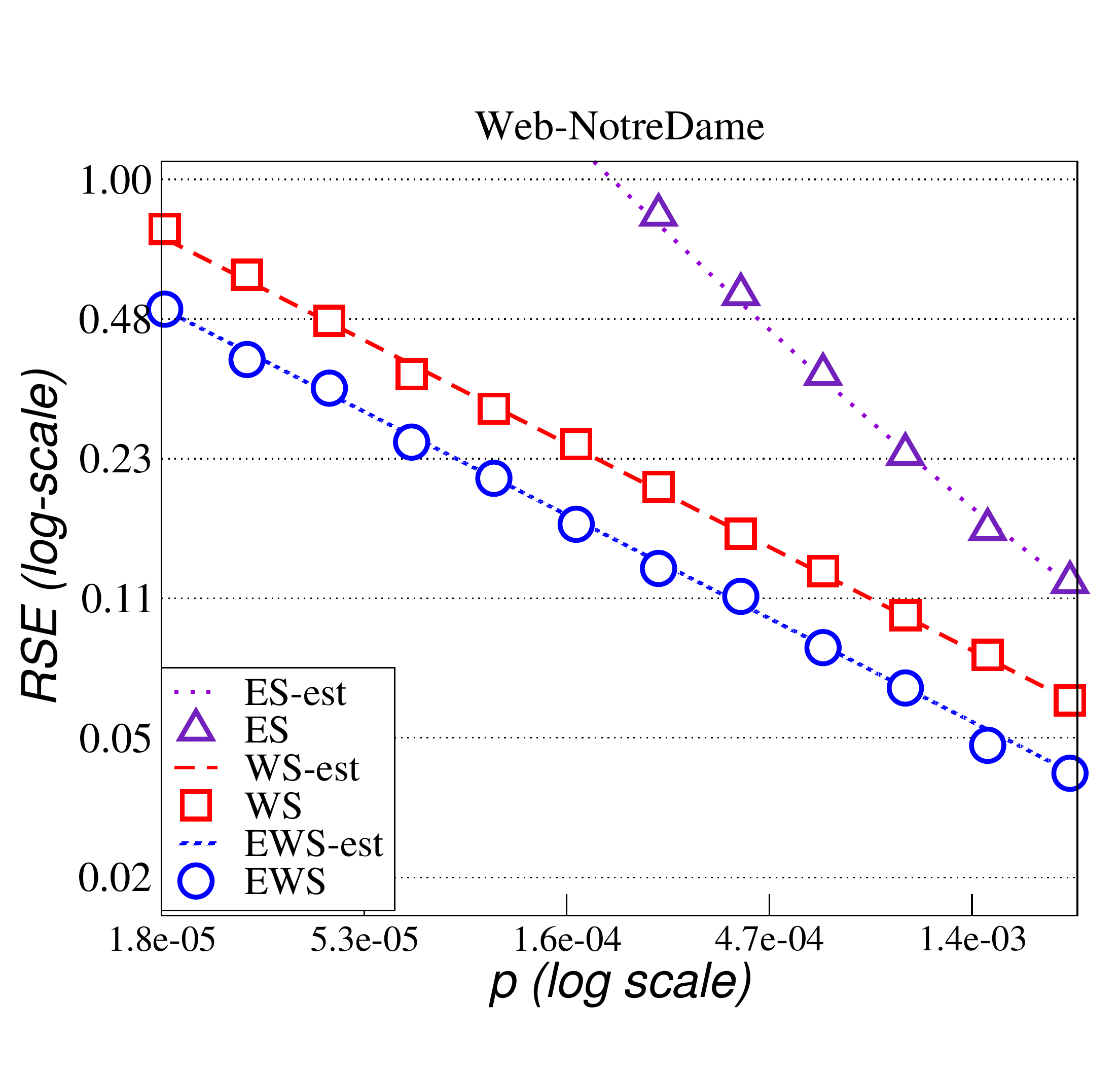}
        \caption{Web-NotreDame.}
        \label{fig:gull}
    \end{subfigure}
    \begin{subfigure}[b]{0.245\textwidth}
        \includegraphics[width=\textwidth]{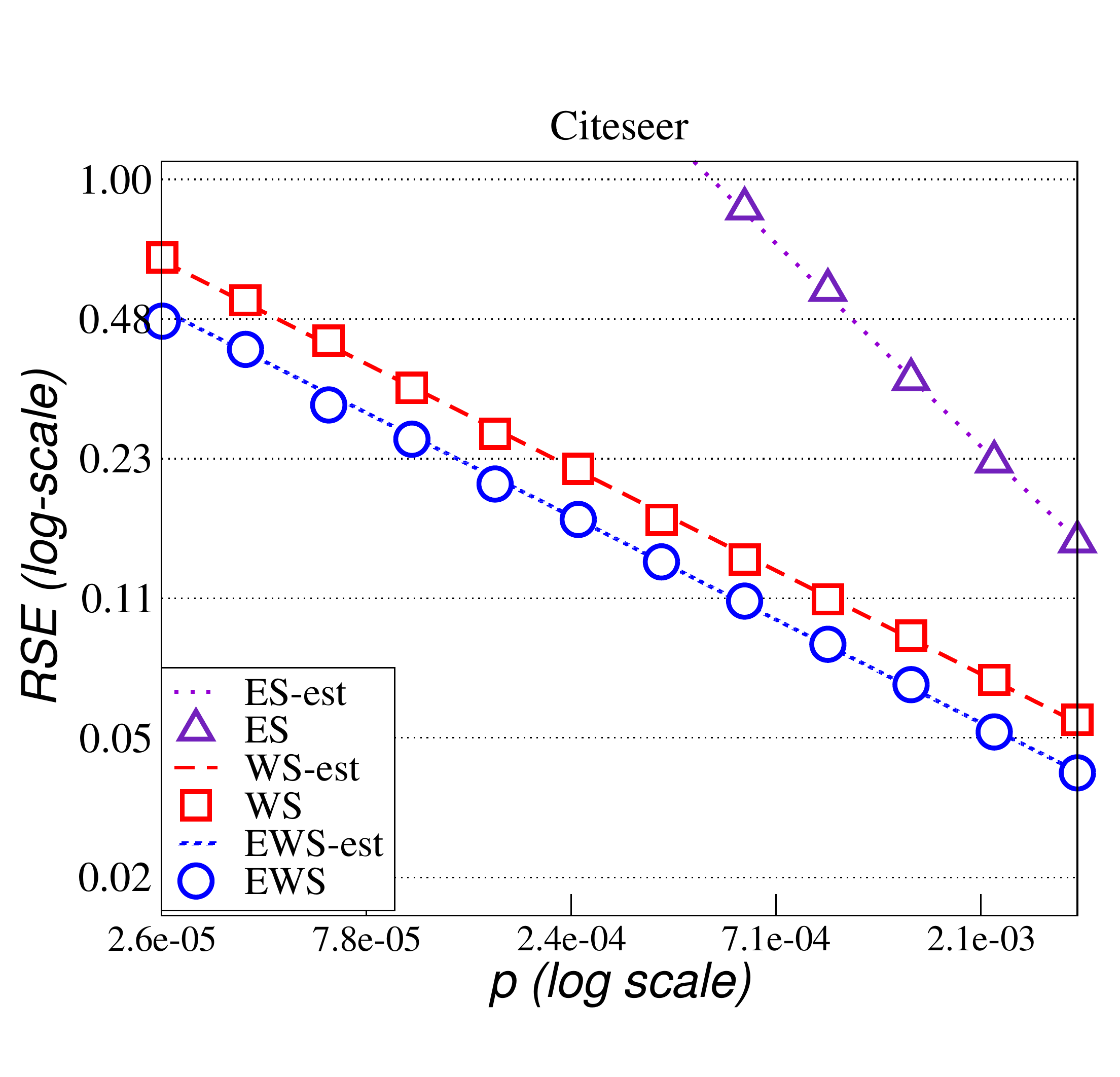}
        \caption{Citeseer.}
        \label{fig:gull}
    \end{subfigure}
    \begin{subfigure}[b]{0.245\textwidth}
        \includegraphics[width=\textwidth]{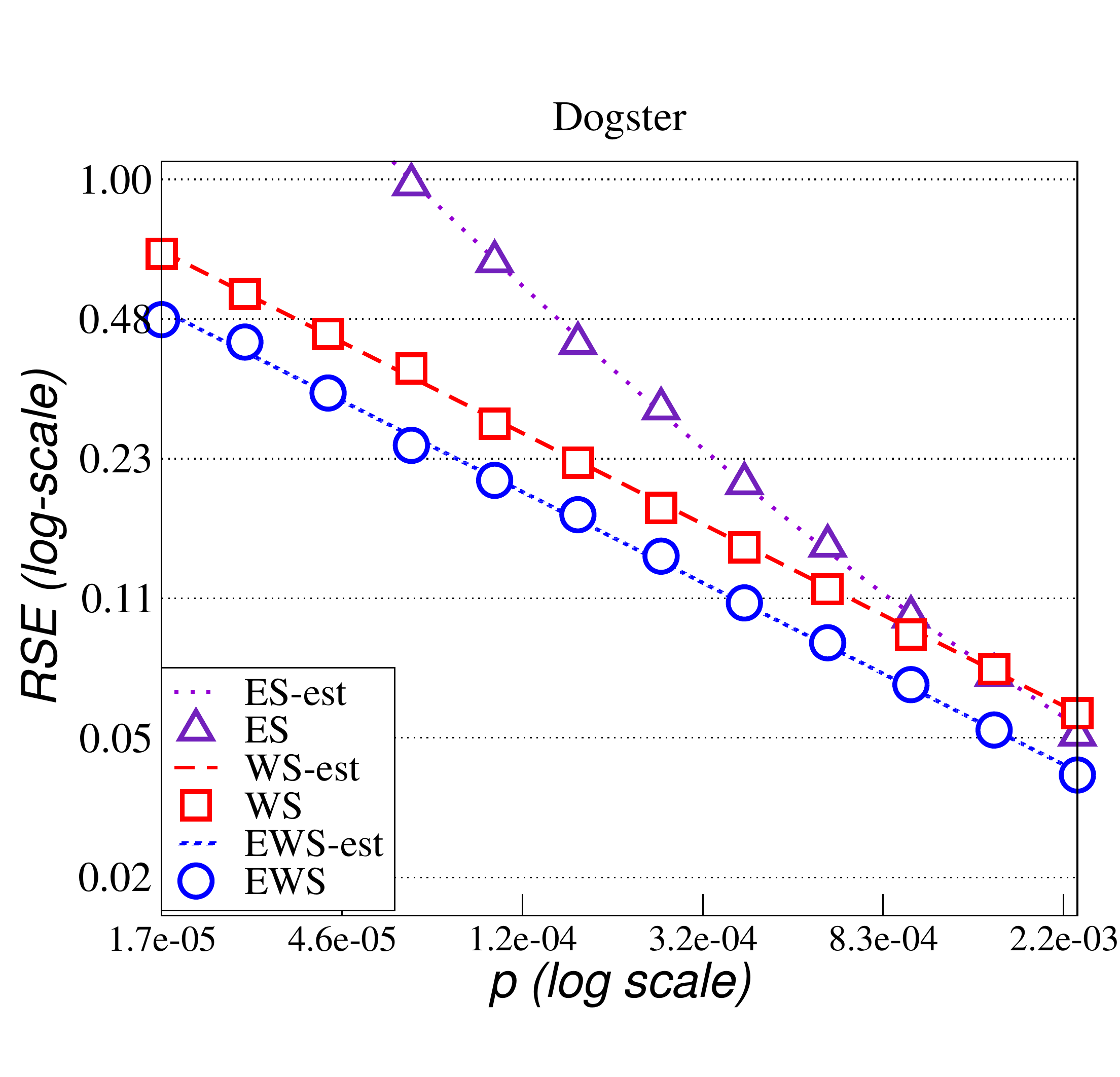}
        \caption{Dogster.}
        \label{fig:gull}
    \end{subfigure}
    \begin{subfigure}[b]{0.245\textwidth}
        \includegraphics[width=\textwidth]{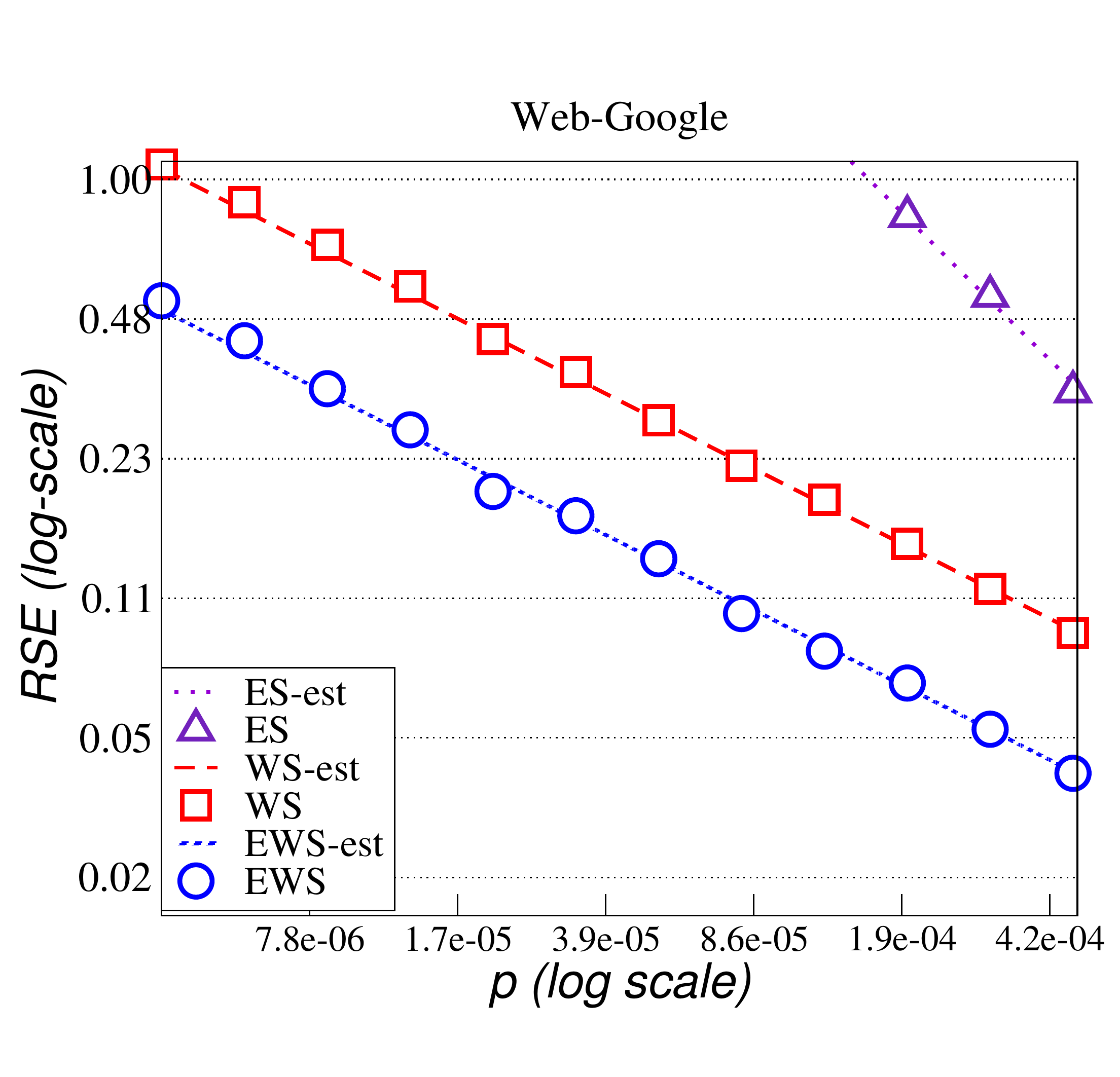}
        \caption{Web-Google.}
        \label{fig:gull}
    \end{subfigure}
    \begin{subfigure}[b]{0.245\textwidth}
        \includegraphics[width=\textwidth]{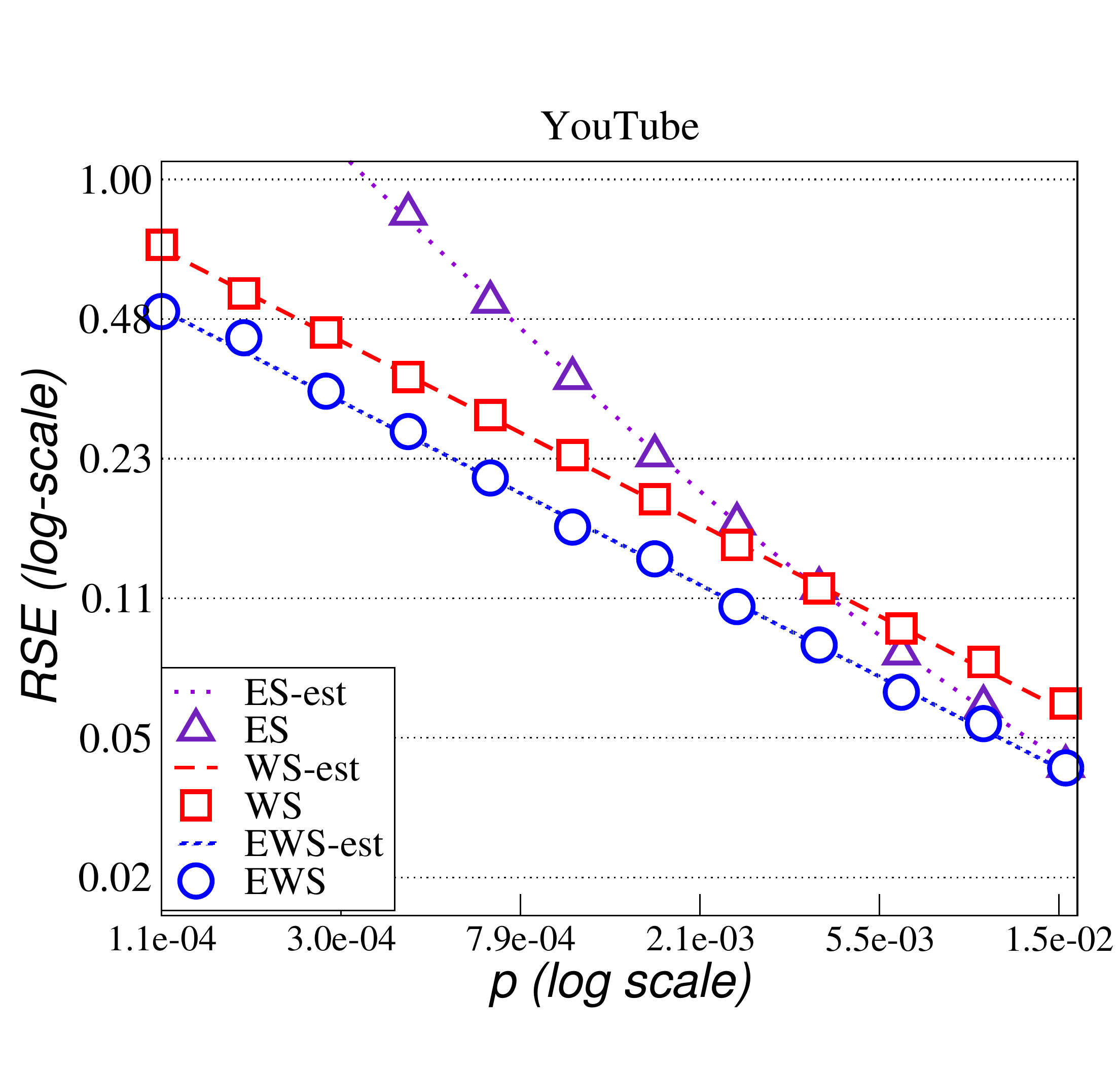}
        \caption{YouTube.}
        \label{fig:gull}
    \end{subfigure}
    \begin{subfigure}[b]{0.245\textwidth}
        \includegraphics[width=\textwidth]{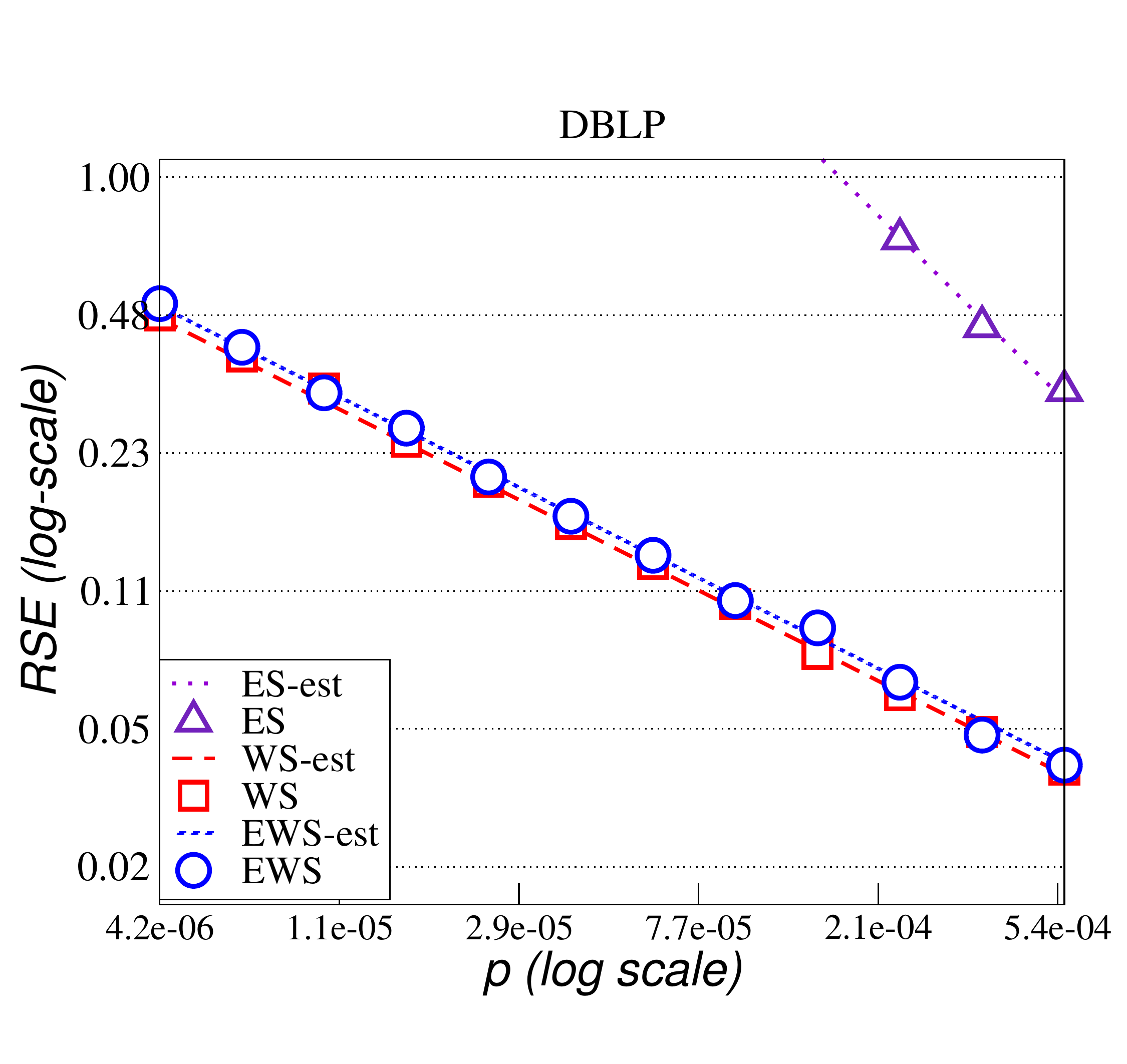}
        \caption{DBLP.}
        \label{fig:gull}
    \end{subfigure}
    \begin{subfigure}[b]{0.245\textwidth}
        \includegraphics[width=\textwidth]{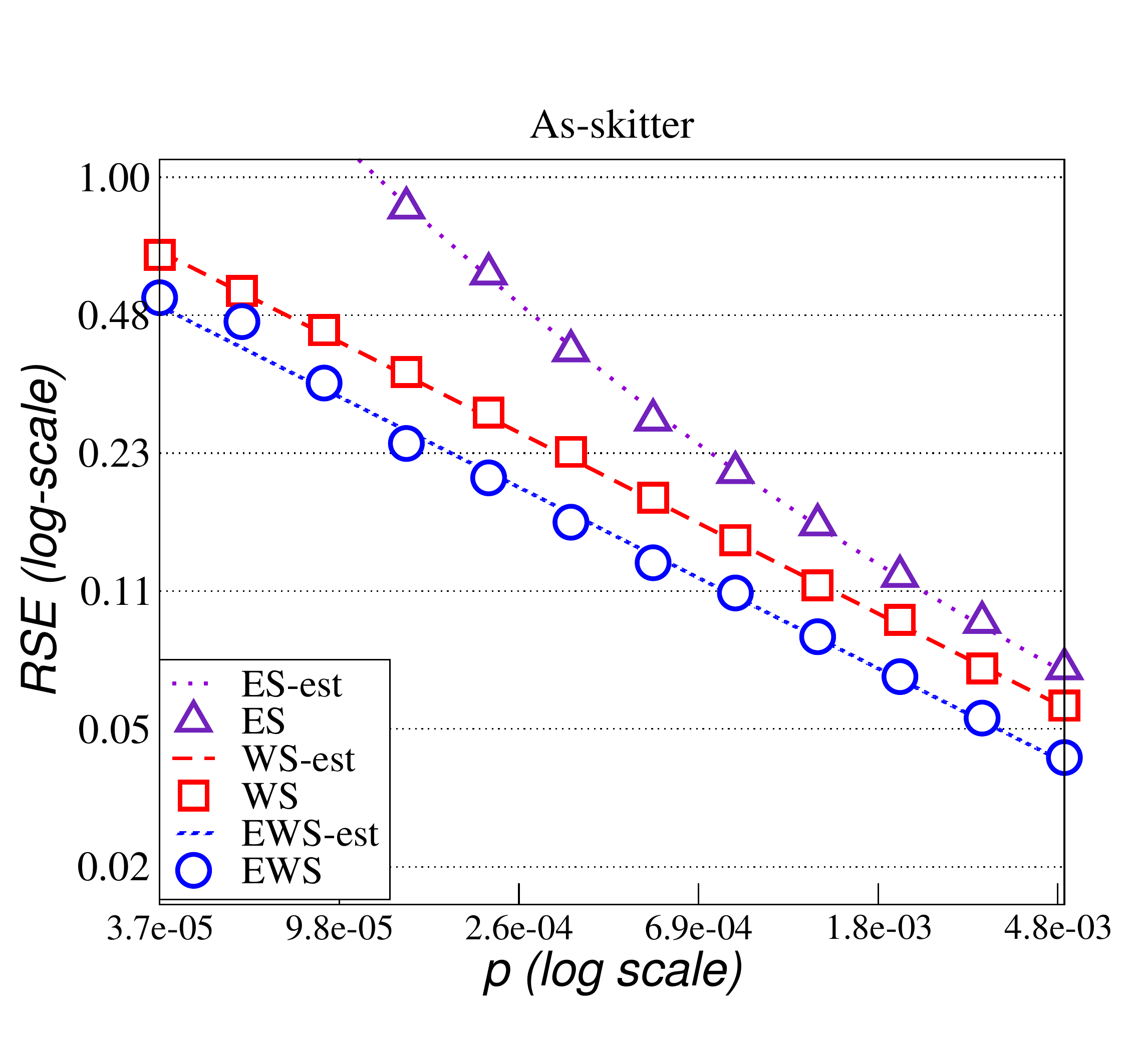}
        \caption{As-skitter.}
        \label{fig:gull}
    \end{subfigure}
      \begin{subfigure}[b]{0.245\textwidth}
        \includegraphics[width=\textwidth]{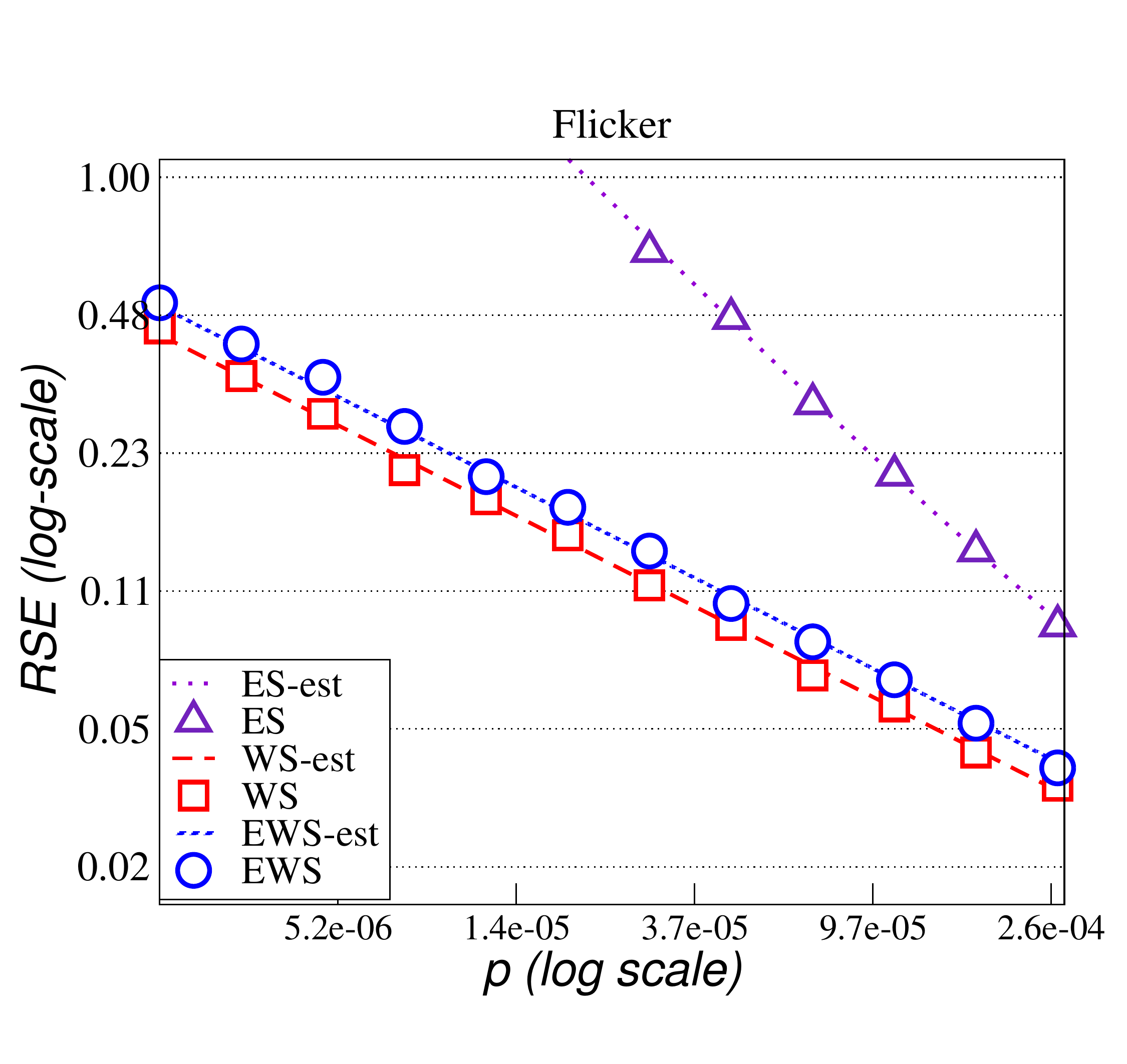}
        \caption{Flicker.}
        \label{fig:gull}
    \end{subfigure}
    \begin{subfigure}[b]{0.245\textwidth}
        \includegraphics[width=\textwidth]{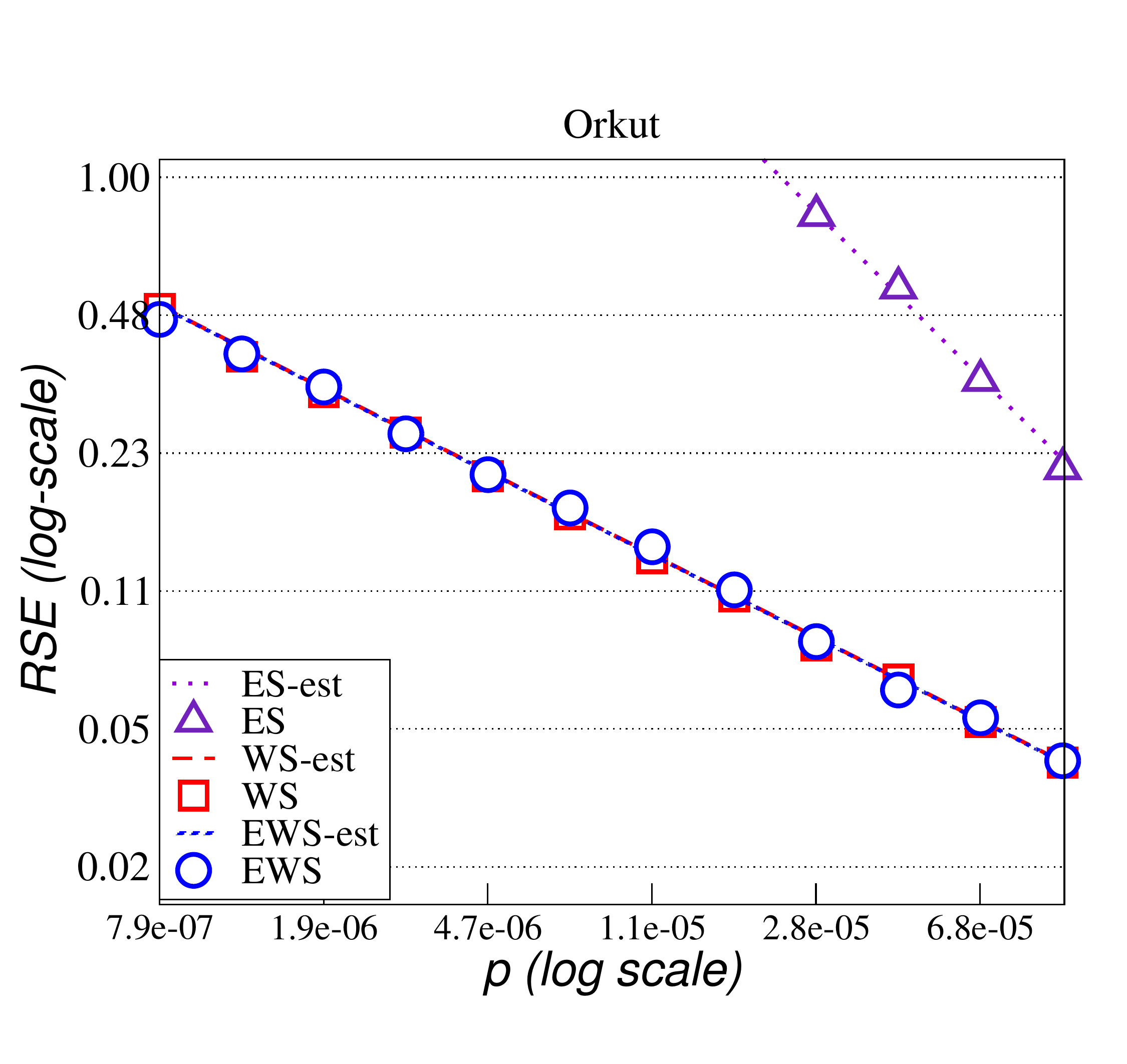}
        \caption{Orkut.}
        \label{fig:gull}
    \end{subfigure}
    \begin{subfigure}[b]{0.245\textwidth}
        \includegraphics[width=\textwidth]{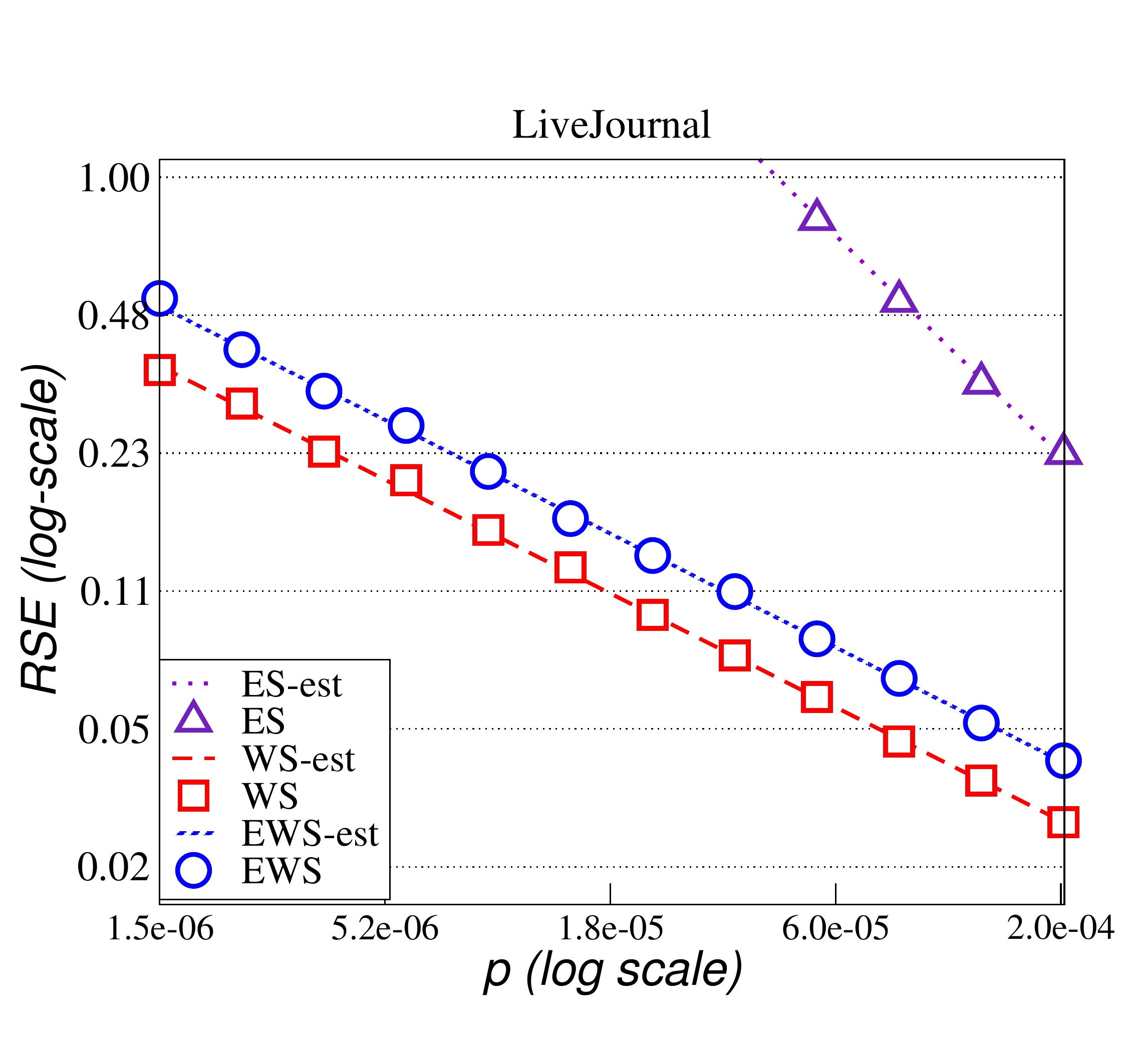}
        \caption{LiveJournal.}
        \label{fig:gull}
    \end{subfigure}
    \begin{subfigure}[b]{0.245\textwidth}
        \includegraphics[width=\textwidth]{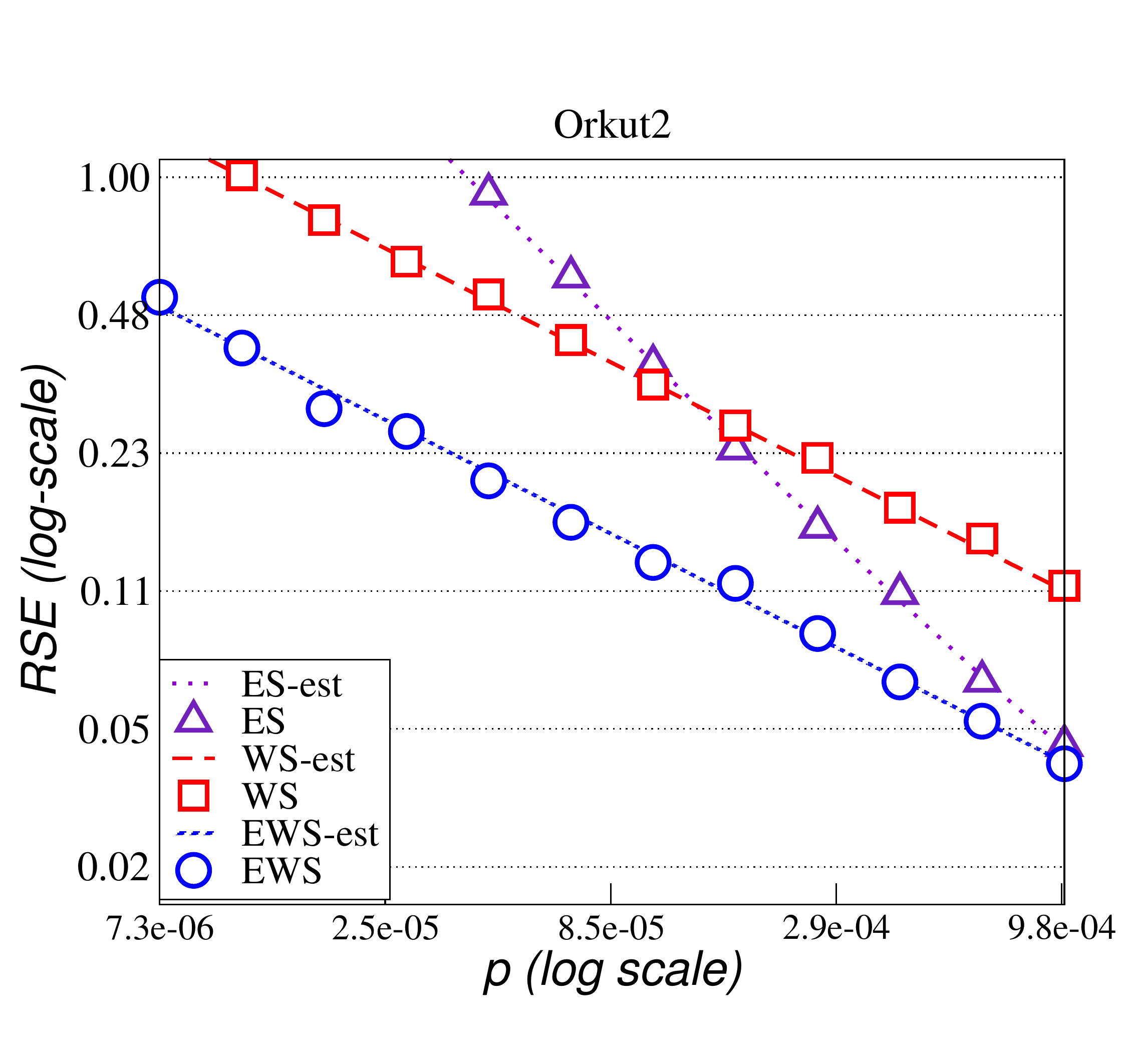}
        \caption{Orkut2.}
        \label{fig:gull}
    \end{subfigure}
    \begin{subfigure}[b]{0.245\textwidth}
        \includegraphics[width=\textwidth]{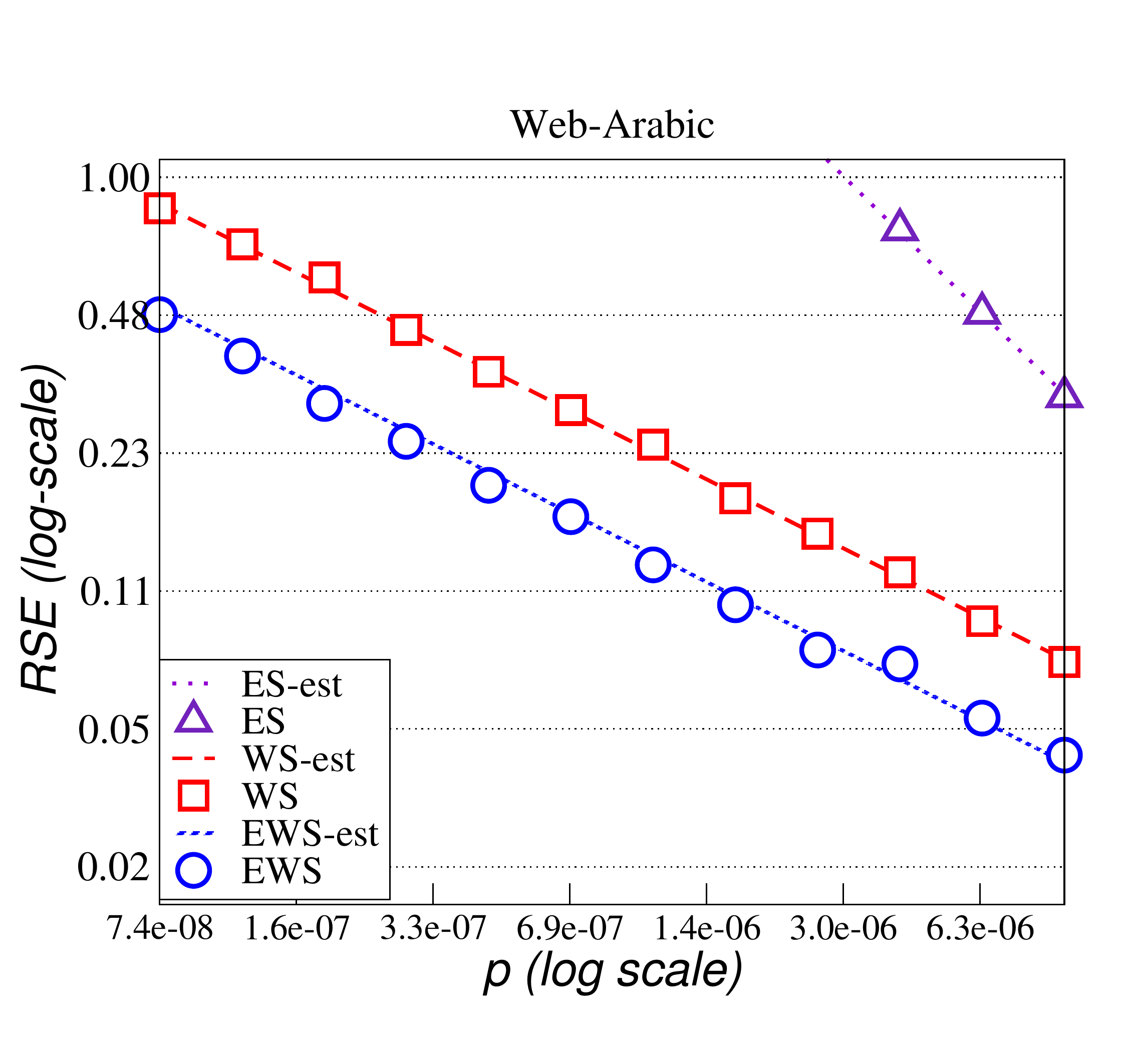}
        \caption{Web-Arabic.}
        \label{fig:gull}
    \end{subfigure}
    \begin{subfigure}[b]{0.245\textwidth}
        \includegraphics[width=\textwidth]{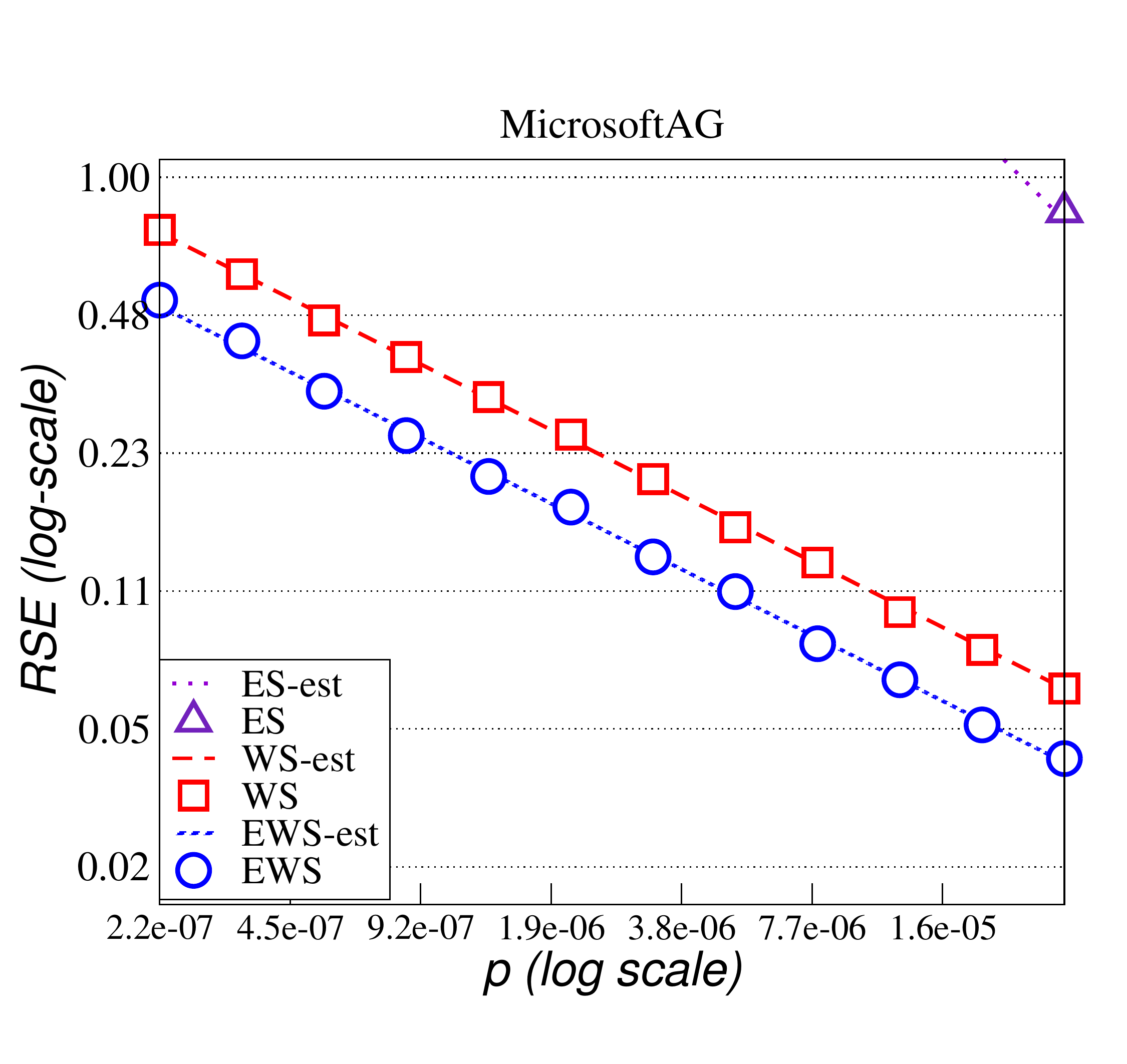}
        \caption{MicrosoftAG.}
        \label{fig:gull}
    \end{subfigure}
    \begin{subfigure}[b]{0.245\textwidth}
        \includegraphics[width=\textwidth]{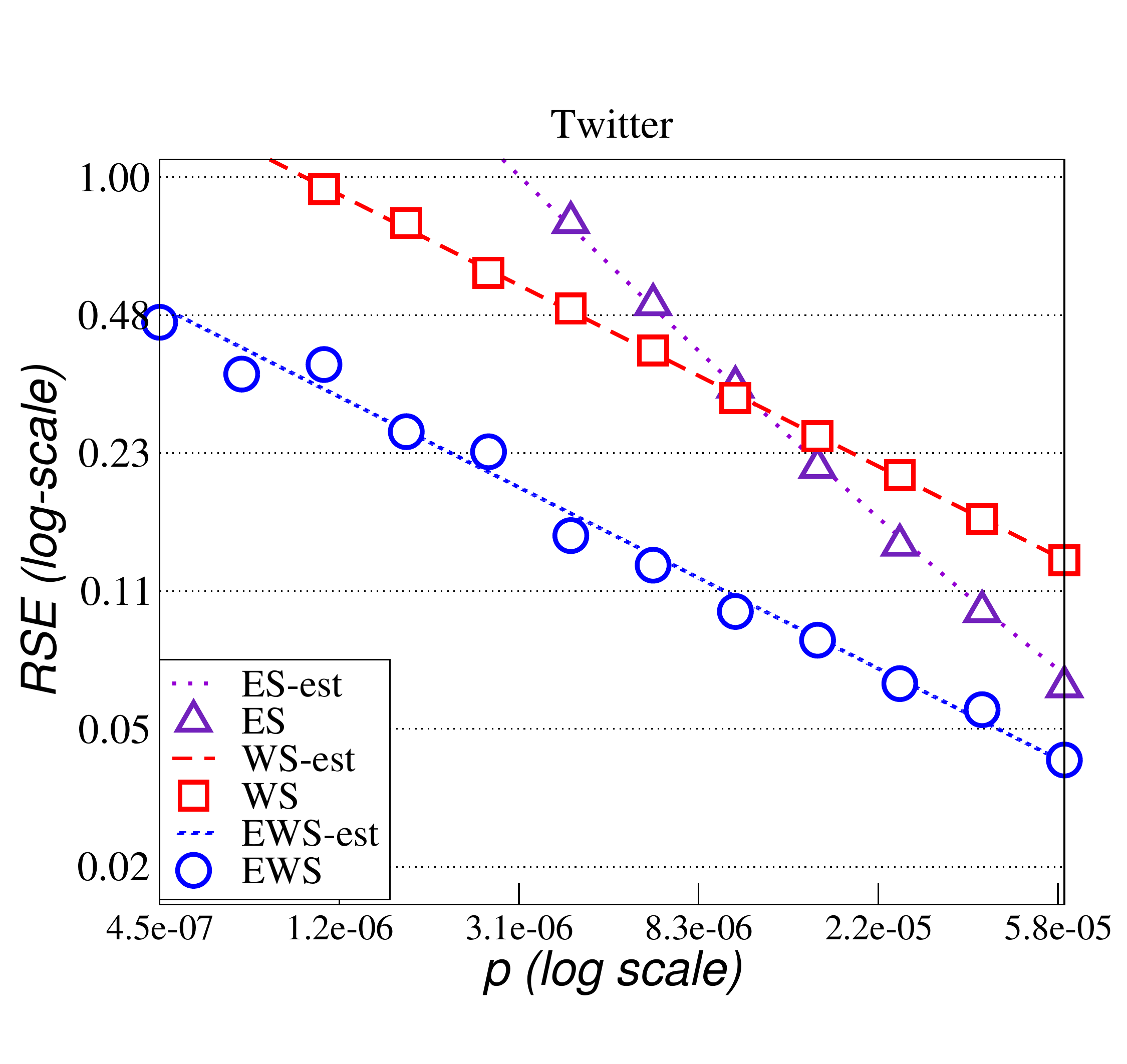}
        \caption{Twitter.}
        \label{fig:gull}
    \end{subfigure}
    \begin{subfigure}[b]{0.245\textwidth}
        \includegraphics[width=\textwidth]{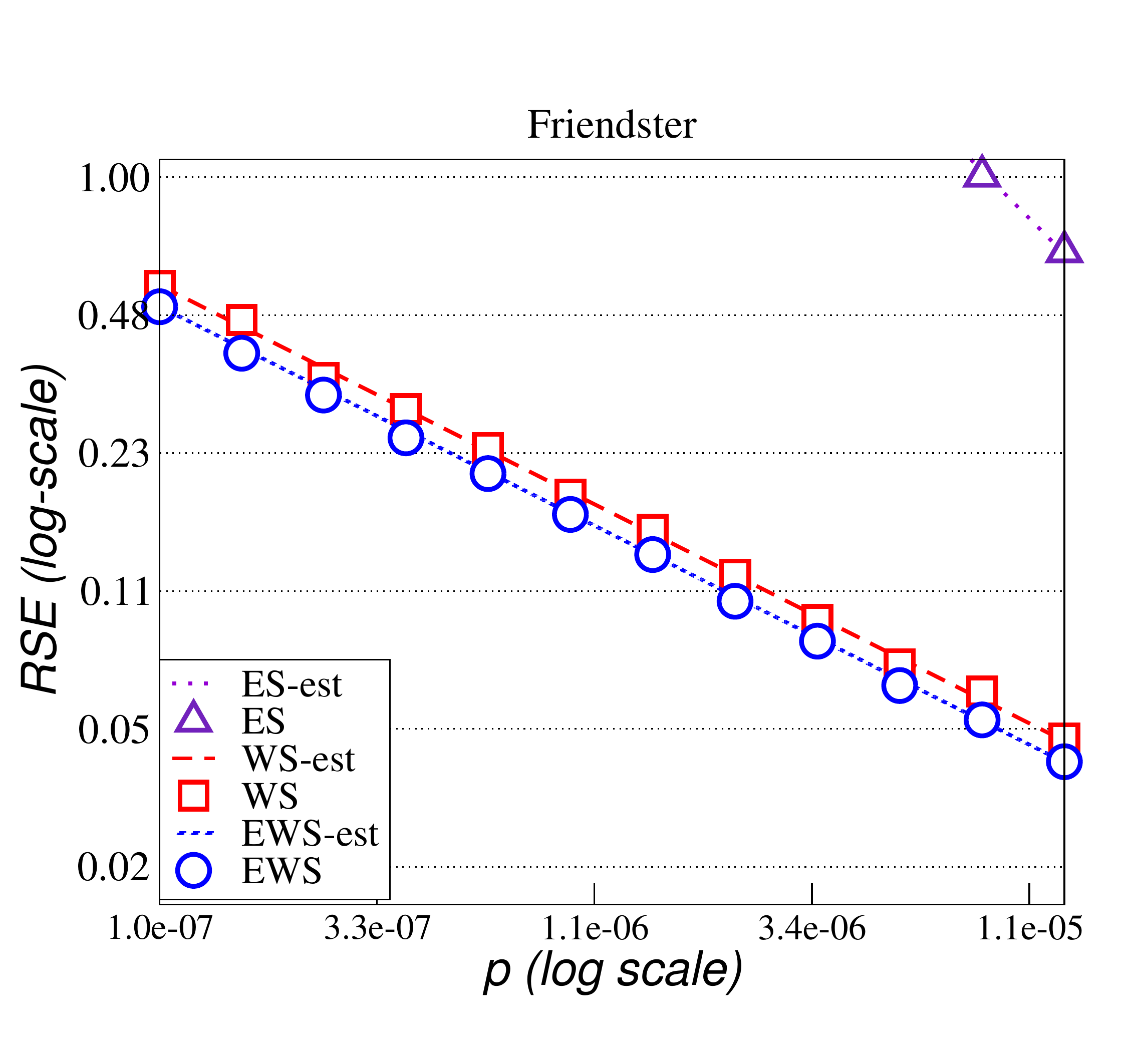}
        \caption{Friendster.}
        \label{fig:gull}
    \end{subfigure}
    \caption{$RSE$ vs $p$ for the 20 datasets.}
    \label{fig:rse}
\end{figure*}

\begin{table}
\caption{Sample sizes to achieve RSE=0.05.}
\label{tbl:datasets-sample}
\centering
\begin{tabular}{l r r r r c}
Dataset & $m$ & $ES$ & $WS$ & $EWS$ & $\frac{WS}{EWS}$ 
\vspace{1mm}
\\
\hline  
Ego-Facebook     &        88K &     2978  &     370  &    843 & 0.44 \\
Enron-email      &       183K &     5619  &    4288  &   3443 & {\bf 1.25} \\
Brightkite       &       214K &    10299  &    3217  &   4417 & 0.73 \\
Dblp-coauthor    &      1049K &    13005  &     905  &   2369 & 0.38 \\
Amazon           &       925K &    14460  &    1549  &   1316 & {\bf 1.18} \\
Web-NotreDame    &      1090K &     9459  &    4162  &   1948 & {\bf 2.14} \\
Citeseer         &      1736K &    20102  &    7660  &   4554 & {\bf 1.68} \\
Dogster          &      8543K &    23326  &   27631  &  14805 & {\bf 1.87} \\
Web-Google       &      4322K &    16556  &    6842  &   1525 & {\bf 4.49} \\
Youtube          &      2987K &    38692  &   63923  &  33586 & {\bf 1.90} \\
DBLP             &      5362K &    20763  &    1949  &   2228 & 0.87 \\
As-skitter       &     11095K &   103336  &   73852  &  40950 & {\bf 1.80} \\
Flicker          &     22838K &    13332  &    3315  &   4507 & 0.74 \\
Orkut            &    117185K &    57679  &    9293  &   9244 & {\bf 1.01} \\
LiveJournal      &     34681K &    38703  &    2791  &   5314 & 0.52 \\
Orkut2           &    327036K &   296516  & 1519667  & 240184 & {\bf 6.33} \\
Web-Arabic       &    553903K &    42394  &   12363  &   4098 & {\bf 3.02} \\
MicrosoftAG      &    528463K &   259340  &   26135  &  11836 & {\bf 2.21} \\
Twitter          &   1202513K &   111704  &  472194  &  53584 & {\bf 8.81} \\
Friendster       &   1806067K &   326237  &   22621  &  17976 & {\bf 1.26} \\
\hline
\end{tabular}
\end{table}

To provide a deeper understanding of the difference between the three
algorithms we compare, we also present 
Table~\ref{tbl:datasets-sample}, where we fix the RSE to 0.05 (95\%
confidence value) and compare the number of sampled entities that
allows all approaches to achieve this given RSE value using the
theoretic approximations from
Equations~\ref{eqn:rse-tau},~\ref{eqn:rse-omega},
and~\ref{eqn:rse-rho}. As seen in
the table, in all datasets either EWS or WS provides the lowest
sampling size. Hence, in the last column of the table we provide the
ratio of the sampling sizes for WS and EWS. Similar to the results
provided in Figure~\ref{fig:rse}, in 14 datasets the $\frac{WS}{EWS}$
value is higher than one, indicating that EWS outperforms WS, and for
6 datasets it is lower. 
Furthermore, because the RSE values of both approaches depend linearly
on the sampling ratio, the $\frac{WS}{EWS}$ ratio in
Table~\ref{tbl:datasets-sample} would remain the same if some
other fixed RSE value was used instead.

Noticeably, the datasets that WS outperforms EWS are datasets with
high global clustering coefficient values (e.g. $C \geq 0.1$). This is
expected as in these datasets the probability of observing
closed wedges when sampling wedges increase significantly and the
advantage of EWS over WS is lost.
However, EWS outperforms WS for all graphs with low global clustering
coefficient values ($C < 0.1$) as the performance of WS is directly
tied to this coefficient. We also note that advantage of EWS over WS becomes more prominent as the sizes of graphs increase. 

Complexity of EWS, ES, and WS are all $O(m)$ and practical runtimes of these algorithms with low sampling ratios over very large graphs are very fast. As an example, on the Twitter dataset, with a sampling ratio of $p \approx 0.00006$,
EWS, ES, and WS takes 1.12, 1.11, and 0.26 seconds. 

\section{Conclusion}
\label{sec:conclusion}

We proposed an edge-based wedge sampling approach for estimating the number of triangles in very large power-law degree graphs. Our approach combines the benefits of edge and wedge sampling to offer highly accurate estimations even for very large sparse graphs and for very low sampling ratios. Furthermore, it does not require any preprocessing to be performed over the graphs. Through analysis conducted over graphs modeling large-scale real-world networks, we theoretically and empirically show that our approach offers highly confident estimations and up to eight times sampling size reduction over the state-of-the-art alternatives even when the sampling ratio is low.

%


\balance
\bibliographystyle{IEEEtran}



%
%
%

\end{NoHyper}
\end{document}